\newif\ifdraft\draftfalse   
\newif\ifanon\anontrue      
\newif\ifcamera\camerafalse 
\newif\iflongrefs\longrefsfalse 
\newif\ifsooner\soonerfalse 
\newif\iflater\laterfalse   
\newif\iffull\fullfalse   
\newif\ifneedspace\needspacefalse 
\newif\ifhighlightnewtext\highlightnewtextfalse 
\makeatletter \@input{texdirectives.tex} \makeatother
\definecolor{dkblue}{rgb}{0,0.1,0.5}
\definecolor{dkgreen}{rgb}{0,0.4,0}
\definecolor{dkred}{rgb}{0.6,0,0}
\definecolor{dkpurple}{rgb}{0.7,0,1.0}
\definecolor{purple}{rgb}{0.9,0,1.0}
\definecolor{olive}{rgb}{0.4, 0.4, 0.0}
\definecolor{teal}{rgb}{0.0,0.4,0.4}
\definecolor{azure}{rgb}{0.0, 0.5, 1.0}
\definecolor{gray}{rgb}{0.5, 0.5, 0.5}
\definecolor{dkgray}{rgb}{0.3, 0.3, 0.3}
\definecolor{orange}{rgb}{1.0, 0.5, 0.0}
\definecolor{lhorange}{rgb}{1.0, 0.8, 0.6}
\newcommand{\comm}[3]{\ifdraft{{\color{#1}[#2: #3]}}\fi}
\newcommand{\tbd}[1]{\comm{orange}{TBD}{#1}}
\newcommand{\nth}[1]{\comm{orange}{Stretch goal}{#1}}
\newcommand{\ch}[1]{\comm{teal}{CH}{#1}}
\newcommand{\ca}[1]{\comm{olive}{CA}{#1}}
\newcommand{\tw}[1]{\comm{purple}{TW}{#1}}
\newcommand{\da}[1]{\comm{brown}{DA}{#1}}
\newcommand{\meta}[1]{\ifdraft{{\color{dkgray}[#1]}}\fi} 
\newcommand*{\EG}{e.g.,\xspace}
\newcommand*{\IE}{i.e.,\xspace}
\newcommand\fstar{F$^{\star}$\xspace}
\newcommand\secrefstar{SecRef$^{\star}$\xspace}
\newcommand\titfstar{\texorpdfstring{F$^{\star}$\xspace}{F*}\xspace}
\newcommand\titsecrefstar{\texorpdfstring{SecRef$^{\star}$\xspace}{SecRef*}}
\newcommand{\cmparrow}{\hspace{-0.35em}\downarrow}
\newcommand{\cmp}[1]{#1\cmparrow}
\newcommand{\bakarrow}{\hspace{-0.35em}\uparrow}
\newcommand{\bak}[1]{#1\bakarrow}
\newcommand{\inv}{\mathcal{I}}
\newcommand{\prref}{\varphi}
\newcommand{\hrel}{\preccurlyeq}
\newcommand{\cinv}{\inv_{c}}
\newcommand{\cprref}{\prref_{c}}
\newcommand{\chrel}{\hrel_{c}}
\newcommand{\every}{\text{\sffamily every$_{\text{\sffamily mref}}$}}
\definecolor{light-gray}{gray}{0.95}
\newcommand{\ls}[1]{\lstinline{#1}}
\newcommand{\lss}[1]{\lstinline[basicstyle=\footnotesize]{#1}}
\newlist{inlist}{enumerate*}{1}
\setlist[inlist]{label=(\arabic*)}
\tikzset{%
  block/.style    = {draw, thick, rectangle, minimum height = 3em,
    minimum width = 3em},
}
\def\Snospace~{\S{}}
\newcommand{\newtext}[1]{\ifhighlightnewtext{\color{dkgreen}#1}\else#1\fi}
\newcommand{\newremove}[1]{\ifhighlightnewtext{\color{dkred}\sout{#1}}\fi}
\begin{document}


\title{\titsecrefstar: Securely Sharing Mutable References between Verified and Unverified Code in \titfstar}

\ifanon
\author{}
\else
\author{Cezar-Constantin Andrici}
  \affiliation{\institution{MPI-SP}\city{Bochum}\country{Germany}}
  \email{cezar.andrici@mpi-sp.org}
  \orcid{0009-0002-7525-2440}
\author{Danel Ahman}
  \affiliation{\institution{University of Tartu}\ifcamera\city{Tartu}\fi\country{Estonia}}
  \email{danel.ahman@ut.ee}
  \orcid{0000-0001-6595-2756}
\author{C\u{a}t\u{a}lin Hri\cb{t}cu}
  \affiliation{\institution{MPI-SP}\city{Bochum}\country{Germany}}
  \email{catalin.hritcu@mpi-sp.org}
  \orcid{0000-0001-8919-8081}
\author{Ruxandra Icleanu}
  \affiliation{\institution{University of Edinburgh}\ifcamera\city{Edinburgh}\fi\country{UK}}
  \email{r.icleanu@gmail.com}
  \orcid{0009-0004-5510-7080}
\author{Guido Mart\'{i}nez}
  \affiliation{\institution{Microsoft Research}\city{Redmond}\state{WA}\country{USA}}
  \email{guimartinez@microsoft.com}
  \orcid{0009-0005-5831-9991}
\author{Exequiel Rivas}
  \affiliation{\institution{Tallinn University of Technology}\ifcamera\city{Tallinn}\fi\country{Estonia}}
  \email{exequiel.rivas@ttu.ee}
  \orcid{0000-0002-2114-624X}
\author{Théo Winterhalter}
  \affiliation{\institution{Inria Saclay}\ifcamera\city{Saclay}\fi\country{France}}
  \email{theo.winterhalter@inria.fr}
  \orcid{0000-0002-9881-3696}
\fi

\ifanon\else
\renewcommand{\shortauthors}{Andrici et al.}
\fi

\begin{abstract}
We introduce \secrefstar, a secure compilation framework protecting stateful
programs verified in \fstar against linked unverified code, with which
the program dynamically shares ML-style mutable references.
To ease program verification in this setting, we \newremove{propose a way of tracking}\newtext{track}
which references are shareable with the unverified code, and
which ones are not shareable and whose contents are thus guaranteed to be
unchanged after calling into unverified code.
This universal property of non-shareable references is exposed in the interface
on which the verified program can rely when calling into unverified code.
The remaining refinement types and pre- and post-conditions that the verified
code expects from the unverified code are converted into dynamic checks about
the shared references by using higher-order contracts.
We prove formally in \fstar that this strategy ensures sound and secure
interoperability with unverified code.
Since \secrefstar is built on top of the Monotonic State effect of
\fstar, these proofs rely on the first monadic representation for this effect,
which is a contribution of our work that can be of independent interest.
Finally, we use \secrefstar to build a simple cooperative multi-threading
scheduler that is verified and that securely interacts with unverified threads.
\end{abstract}

\ifanon\else
\begin{CCSXML}
  <ccs2012>
     <concept>
         <concept_id>10011007.10010940.10010992.10010998.10010999</concept_id>
         <concept_desc>Software and its engineering~Software verification</concept_desc>
         <concept_significance>500</concept_significance>
         </concept>
     <concept>
         <concept_id>10011007.10011006.10011041</concept_id>
         <concept_desc>Software and its engineering~Compilers</concept_desc>
         <concept_significance>500</concept_significance>
         </concept>
   </ccs2012>
\end{CCSXML}
  
\ccsdesc[500]{Software and its engineering~Software verification}
\ccsdesc[500]{Software and its engineering~Compilers}

\keywords{secure compilation, formal verification, proof assistants, monotonic state}
\fi



\maketitle

\section{Introduction}
\label{sec:intro}
Verifying stateful programs using proof-oriented languages like Rocq, Lean, Isabelle/HOL, Dafny, and \fstar{}
is being applied to increasingly realistic projects~\cite{compcert,certikos,record,everparse,DBLP:conf/cpp/ArasuRRSFHPR23,
linear-dafny,veribetrkv,BhargavanB0HKSW21, HoPBB22,
ZakowskiBYZZZ21, Appel16, MurrayMBGBSLGK13, KleinAEHCDEEKNSTW10}.
In such projects,
one notices that verified code calls into, or is called by, unverified code.
These calls are necessary to integrate verified code with existing systems,
but usually they have an assumed specification,
making them a potential source of unsoundness and security vulnerabilities,
since the unverified code can inadvertently or maliciously break the internal invariants of the verified code.

While this is a general problem in any proof-oriented language,
in this work we work with \fstar{} and represent stateful programs using a monadic effect
called Monotonic State~\cite{preorders}, instantiated to obtain first-order
ML-style mutable references, which are dynamically allocated and unforgeable,
which cannot be deallocated, and whose type cannot be changed.
%
Monotonic State has facilitated the verification of large projects that were later integrated into Firefox,
Linux, and Windows~\cite{haclstar,evercrypt,everparse3d}, yet it offers
no guarantees that linking a verified program with unverified code is sound or secure.

The most challenging scenario for ensuring soundness involves dynamic sharing of
dynamically allocated mutable references with unverified code.
%
This is so because sharing a reference is a transitive property:
the contents of the reference also gets shared, including any references reachable from it.
Sharing is also permanent: once shared, a reference is shared forever as
it can be stashed away by unverified code,
and writes to a shared reference make the newly written values shared as well.
It is thus challenging to track which references are shared and which ones are not,
so that one could still apply static verification.
Let's look at a simple example that illustrates these points:

\begin{wrapfigure}[8]{l}{.48\textwidth}
  \vspace{-0.50\intextsep}
\begin{lstlisting}[numbers=left,escapechar=\$,xleftmargin=13pt,framexleftmargin=10pt,numberstyle=\color{dkgray},xrightmargin=0pt]
let prog (lib : (ref (ref int) -> cb:(unit -> unit))) =
  let secret : ref int = alloc 42 in
  let r : ref (ref int) = alloc (alloc 0) in  $\label{line:prog:alloc}$
  let cb = lib r in $\label{line:prog:fst_lib_call}$
  r := alloc 1; $\label{line:prog:change_r}$
  cb (); $\label{line:prog:snd_lib_call}$
  assert (!secret == 42) $\label{line:prog:assert}$
\end{lstlisting}
\end{wrapfigure}
\ifsooner
\ch{This kind of simple example seems common for fully abstract compilation work.
  We might even be able to find something similar in prior papers and cite them for it?}
\fi

\noindent
It consists of a program that takes as argument an unverified library.
The library gets as input a reference to a reference to an integer,
and returns a callback \ls{cb} to the program.
The program allocates a reference \ls{r} (line \autoref{line:prog:alloc}) and
passes it to the library (line \autoref{line:prog:fst_lib_call}).
The library can store \ls{r}
and the reference \ls{r} points to, and read and modify them in later callbacks.
Thus, after the first call into the library (line \ref{line:prog:fst_lib_call}),
the program and the library share both \ls{r} and the reference \ls{r} points to.
Between the initial call into the library and the call into the callback \ls{cb}
(line \ref{line:prog:snd_lib_call}), the program modifies \ls{r} pointing it to a newly allocated reference
(line \ref{line:prog:change_r}), and thus this new reference also becomes
shared, even if not passed to \ls{cb} directly---the
sharing happens by updating an already shared reference.
So in this example, a total of three references allocated by the program are shared with the library,
even if only one reference was explicitly passed.
Here is an adversarial library that writes to all shared references
every time the callback is invoked:
\begin{lstlisting}
let lib r = let g : ref (list (ref int)) = alloc [] in
  (fun () -> g := !r :: !g;          (** saves where $r$ points to each time the callback is called **)
         iter (fun r' -> r' := 0) !g; (** writes to all these shared integer references **)
         r := alloc 0)           (** points $r$ to a new location **)
\end{lstlisting}

To be able to verify a stateful program that calls into unverified code, for
each such call one has to specify which of the program's references get modified,
also called the footprint of a call.
If one cannot determine such a footprint, then in order to ensure soundness, one is
forced to conservatively assume that the call can modify {\em all} references,
which makes verification very challenging, because one loses almost all
properties of references. For instance, in \ls{prog}, we would lose
that \ls{secret} equals $42$, and we would not be able to prove the assertion on
line \ref{line:prog:assert}. 
Yet, as explained above,
because references are dynamically allocated, and because of the transitive and
persistent nature of their dynamic sharing, it is hard to determine which
references are shared and which ones not.

\newremove{
To our knowledge, no existing technique allows for the sound verification of
stateful programs that can be linked with arbitrary unverified code 
while sharing dynamically allocated mutable references with it.}
\newremove{
A simpler case considered by \mbox{\citet{DreyerNB12}} (also imaginable for other
separation logic work~\mbox{\cite{Reynolds02}})
%
is to have no passing of references to
unverified code (\IE no dynamic sharing), but this cannot always be ensured,
because existing code cannot always be clearly separated to avoid passing references.
While one could consider rewriting the code to avoid reference passing by 
wrapping them in closures~\mbox{\cite{SwaseyGD17, SammlerGDL20}},
%
or in some cases 
to only pass references at abstract
types,\ca{for type abstraction, I have no idea what to cite. I added it because
  of the discussion on how one usually hides mutable state behind an interface in OCaml.}
this would involve refactoring the unverified code to use a different interface.
Other techniques allow passing references by checking and restoring the
properties of the not passed references after a call into unverified
code~\mbox{\cite{cerise, AgtenJP15}}, but this is done dynamically and adds an extra cost---for \ls{prog} above,
the contents of \ls{secret} would be checked/restored twice, once for each call into the unverified code.}

The solution we propose in this paper allows the sound verification of stateful \newtext{\fstar{}} programs
even if the verified program is linked with unverified code and references are dynamically shared.
Our solution builds on the idea that being shareable is a monotonic property of
references, meaning that it can be treated as a logical invariant once it is established.
Instead of tracking precisely which references cross the verified-unverified boundary, we
overapproximate by adding a computationally irrelevant labeling mechanism 
in which fresh references are labeled as \ls{private}, and that
allows the user to dynamically label a reference as \ls{shareable},
which is permanent.
Our overapproximation is sound since we statically verify that only references that
are labeled as \ls{shareable} can cross into unverified code.
In our example, this requires the user to add the following calls to \ls{label_shareable}:
\begin{lstlisting}[numbers=left,escapechar=\$,xleftmargin=13pt,framexleftmargin=10pt,numberstyle=\color{dkgray}]
let safe_prog lib =
  let secret = alloc 42 in
  let r = alloc (alloc 0) in $\setlength{\fboxsep}{1pt}\colorbox{lhorange}{\sffamily label\_shareable (!r); label\_shareable r;}$
  let cb = lib r in 
  let v = alloc 1 in $\setlength{\fboxsep}{1pt}\colorbox{lhorange}{\sffamily label\_shareable v;}$ r := v; $\label{line:safe_prog:alloc_share}$
  cb ();
  assert (!secret == 42) $\label{line:safe_prog:assert}$
\end{lstlisting}
\newtext{To correctly overapproximate,}
we also maintain a global invariant
on the heap stating that shareable references
always point only to other shareable references---preventing one from accidentally
pointing a shareable reference to a not shareable reference (\EG the original \ls{r:=alloc 1}
would not be allowed on line \ref{line:safe_prog:alloc_share} because the fresh reference returned by
\ls{alloc 1} is labeled \ls{private}).
The labels simplify specifying what the unverified code can do---it can
modify only \ls{shareable} references---which
allows preserving the properties of private references
(\EG one is able to prove the assertion on line \ref{line:safe_prog:assert} since \ls{secret} is labeled \ls{private}).
%
%
%
\newtext{
Note that we have only one notion of references for which we keep track of
how the labels change dynamically.
}

We have built a secure compilation framework, \secrefstar{}, around the main
ideas above and inspired by previous work for the IO effect~\cite{sciostar}.
\newremove{Similarly to this work,}\newtext{Like them,} we shallowly embed the source and target language
in \fstar{}, in our case on top of Labeled References.
The only \newremove{side }effect our languages have is state, \newremove{and
they both have}\newtext{with} first-order ML-style mutable references.
\newtext{The difference between the languages is the interface through which
the program communicates with the unverified code:
a source interface has
concrete 
refinement types and pre- and post-conditions to model verified code, while a target interface
has them abstract to model unverified code.}
Part of \secrefstar is dedicated to statically verifying stateful source programs
like the one above against rich interfaces assumed about linked unverified target code that
include refinement types as well as pre- and post-conditions.
These interfaces expose a universal property of unverified code,
which cannot modify \ls{private} references.
However, to establish any logical predicates on \ls{shareable} references, which can
cross the verified-unverified boundary, \secrefstar{} \newremove{has to add}\newtext{adds} dynamic checks,
\newremove{which it does}using higher-order contracts.

This paper makes the following \textbf{contributions:}

\begin{itemize}[leftmargin=13pt,nosep,label=$\blacktriangleright$]
  \item We introduce Labeled References, a computationally irrelevant labeling mechanism
  built on top of Monotonic State~\cite{preorders}
  that allows one to verify stateful \newtext{\fstar{}} programs that share ML-style mutable references with unverified code.
  The labeling mechanism tracks which references are shareable with the unverified code and which ones are not.
  The labels simplify specifying which references can be modified by a call
  into unverified code---\IE only shareable references can be modified.
  Our solution allows one to pass between the verified and the unverified code
  mutable data structures such as linked lists
  and callbacks that encapsulate and modify references.
  %
\iflater
  \nth{We show that one can encode Labeled Reference into Separation Logic too by showing how to do it in Pulse, a DSL of \fstar{}
  for Separation Logic.}
\fi

  \item We build a secure compiler that takes a program verified using Labeled References and protects
  it so that it can be securely linked with unverified code, which can only modify shareable references.
  The compiler converts some of the specifications
  at the boundary between verified and unverified code into dynamic checks by adding higher-order contracts,
  and we verified in \fstar{} the soundness of this enforcement with respect to the specification of the program.


\item We also provide a machine-checked proof in \fstar{} that a stateful program
  verified using Labeled References, then compiled with \secrefstar{},
  can be securely linked with arbitrary unverified code.
  For this we prove Robust Relational Hyperproperty Preservation (RrHP), which
  is the strongest secure compilation criterion of \citet{AbateBGHPT19}, and in
  particular stronger than full abstraction.


\item We give the first monadic representation of Monotonic State by
  using a free monad, which overcomes a limitation in the original work of~\citet{preorders}.
  We also provide a machine-checked soundness proof in \fstar for Monotonic
  State differently than the original paper proof of \citet{preorders}.
  While we expect the monadic representation of Monotonic State to be generally useful,
  even beyond this paper, it is also crucial for our mechanized proofs above.


\iflater
\ca{would it be beneficial if we actually make a pull request in \fstar{} to deaxiomatize state?}
\fi
%


  \item Finally, we put our framework into practice by developing a case
  study inspired by a simple cooperative multi-threading model. In
  this case study, we showcase that we can statically verify a simple scheduler
  even when it interacts with and manages unverified threads.
\end{itemize}

\paragraph{Outline.}
\autoref{sec:keyideas} starts with the key ideas of \secrefstar{}.
\autoref{sec:monotonic-state} presents our monadic representation and soundness
proof for Monotonic State.
\autoref{sec:labeled-references} builds Labeled References on top of Monotonic State.
\autoref{sec:hoc} then presents our higher-order contracts.
\autoref{sec:secure-compilation} shows how all pieces of \secrefstar{} come
together and how we prove that \secrefstar{} satisfies soundness and RrHP.
\autoref{sec:case-study} presents the cooperative multi-threading case study.
%
Finally, \autoref{sec:related} discusses related work and \autoref{sec:future}
conclusions and future work.
%
We believe \secrefstar{} is a significant step towards building a
secure compiler from \fstar{} to OCaml.

%
%

\section{Key Ideas of \secrefstar{}}
\label{sec:keyideas}

In this section, \autoref{sec:key-verification} introduces the
running example of an autograder sharing references with unverified code, and
showing how to verify it.
\autoref{sec:key-ucode} then explains how we represent unverified code and
\autoref{sec:key-hoc} how we add higher-order contracts.
%
Lastly, beyond \ls{private} and \ls{shareable},
\autoref{sec:key-encapsulated} introduces a third label,
\ls{encapsulated}, for references that are not shareable with the unverified code,
but that can be indirectly updated by verified callbacks
that are passed to the unverified code.

\begin{figure}[h]
  \begin{lstlisting}[numbers=left,escapechar=\$,xleftmargin=13pt,framexleftmargin=10pt,numberstyle=\color{dkgray}]
type linked_list (a:Type) = | LLNil -> linked_list a | LLCons : a -> ref (linked_list a) -> linked_list a
type student_hw = ll:ref (linked_list int) -> LR (either unit err) $\label{line:hwtype:start}$
    (requires (fun h_0 -> witnessed (is_shareable ll))) $\label{line:hwpre}$
    (ensures (fun h_0 r h_1 -> modif_only_shareable h_0 h_1 /\ same_labels h_0 h_1 /\ $\label{line:hw-post-modifies}$
                        (Inr? r \/ (sorted ll h_1 /\ no_cycles ll h_1 /\ same_values ll h_0 h_1)))) $\label{line:hwpost1}$ $\label{line:hwtype:end}$
let autograder (test:list int) (hw:student_hw) (grade: mref (option int) grade_preorder) $\label{line:agtype}\label{line:mgrade}$
  : LR unit (requires (fun h_0 -> is_private h_0 grade /\ None? (sel h_0 grade)) $\label{line:agpre}$
           (ensures (fun h_0 () h_1 -> Some? (sel h_1 grade) /\ $\label{line:agpost:start}$
                               same_labels h_0 h_1 /\ modif_shareable_and !{grade} h_0 h_1)) = $\label{line:agpost:end}$
    let ll = create_llist test in $\label{line:ag-impl:start}$
    label_llist_as_shareable ll; witness (is_shareable ll);
    let res = hw ll in gr := Some (determine_grade res)$\label{line:ag-set-grade}$ $\label{line:ag-impl:end}$
  \end{lstlisting}
\caption{The illustrative example of a verified autograder. Types are simplified.}
\label{fig:autograder_example}
\end{figure}

\subsection{Static verification using Labeled References}\label{sec:key-verification}
In \autoref{fig:autograder_example} we show how to use Labeled References
to verify an autograder.\footnote{
\fstar syntax\ifanon\else~\cite{sciostar}\fi{} is similar to OCaml (\lss{val}, \lss{let}, \lss{match}, etc).
Binding occurrences \lss{b} take the form \lss{x:t}
or \lss{#x:t} for an implicit argument.
%
Lambda abstractions are written
\lss{fun b_1 ... b}$_n$\lss{ -> t},
whereas
\lss{b_1 -> ... -> b}$_n$\lss{ -> C} denotes a curried function
type with result \lss{C}, a computation type describing the effect,
result, and specification of the function.
Contiguous binders of the same type may be written
\lss{(v_1 ... v}$_n$\lss{ : t)}.
Refinement types are written \lss{b\{t\}} 
(\EG \lss{x:int\{x>=0\}} represents natural numbers).
Dependent pairs are written as \lss{x:t_1 & t2}.
%
%
%
For non-dependent function types, we omit the name in the argument binding, \EG type
\lss{#a:Type -> (#m #n : nat) -> vec a m -> vec a n -> vec a (m+n)}
represents the type of the
append function on vectors,
where both unnamed explicit arguments and the return type depend on the
implicit arguments. 
%
%
A type-class constraint \lss{\{| d : c t1 .. tn |\}} is a special
kind of implicit argument, solved by a tactic during elaboration.
%
%
\lss{Type0} is the lowest universe of \fstar{}; we also use it to write propositions,
including \lss{True} (True) and \lss{False} (False).
%
We generally omit universe annotations.
The type \lss{either t_1$~$t_2} has two constructors, \lss{Inl:(#t_1 #t_2:Type) -> t_1 -> either t_1$~$t_2}
and \lss{Inr:(#t_1 #t_2:Type) -> t_2 -> either t_1$~$t_2}.
Expression \lss{Inr? x} tests whether \lss{x} is of the shape \lss{Inr y}.
Binary functions can be made infix by using backticks:
\lss{x `op`$~$y} stands for \lss{op x y}.}
The autograder takes as argument
a function submitted by a student to solve the homework \newtext{(\ls{hw} of type \ls{student_hw})}
and an initially empty (\ls{None}) reference where the grade will be stored (line \ref{line:agtype}).
The function provided by the student is supposed to sort a linked list in place.
The autograder creates a shareable linked list,
executes the homework on it, checks the result,
and sets the grade (lines \ref{line:ag-impl:start}-\ref{line:ag-impl:end}).
The main property verified for the autograder is that the grade stays private and that it gets set to some value (lines \ref{line:agpost:start}-\ref{line:agpost:end}).

Naturally, the homework is unverified since it is written by a student.
Thus, since the autograder calls the unverified homework, to be able to verify it,
one has to assume a specification for the behavior of the homework.
\secrefstar{} then ensures that the homework satisfies the specification.
%
Soundly specifying the behavior of unverified code is one of the main
contributions of this paper, and we explain this below for
\ls{student_hw} (lines $\ref{line:hwtype:start}$-$\ref{line:hwtype:end}$
  in \autoref{fig:autograder_example}), 
illustrating the assumed specification:

\begin{enumerate}[nosep]
  \item The pre-condition requires that the linked list is labeled as shareable,
  statically guaranteeing that the verified autograder passes only shareable references to the homework (line \ref{line:hwpre}).\label{spec:hw-pre}

  \item The post-condition ensures that only shareable references have been modified, 
  and existing references have the same label (line \ref{line:hw-post-modifies}). 
  This is a universal property of all unverified computations,
  which work only with shareable references (\autoref{sec:key-ucode}).
  \label{spec:hw-post-modifies}

  \item The post-condition also ensures that
  if the computation succeeds (by returning \ls{Inl}),
  the linked list is sorted, without cycles, and contains the same values (line \ref{line:hwpost1}).
  This is the part of the specification that gets enforced dynamically
  using higher-order contracts (\autoref{sec:key-hoc}).\label{spec:hw-post-list}
\end{enumerate}

The way we specify the footprint in \autoref{fig:autograder_example} in
the post-condition of the \ls{student_hw} (item (\ref{spec:hw-post-modifies}) above,
  line \ref{line:hw-post-modifies}) is novel.
We specify it as the set of references that are labeled as shareable (using the predicate \ls{modif_only_shareable}).
The usual way of specifying it as the set of references that are
in the linked list
\newtext{(\IE the root reference, the one that follows, and so on to the last reference in the linked list)}
\cite{preorders,Reynolds02,leino10dafny} does not scale in our setting, where we also need to
specify higher-order unverified code. Imagine that the unverified function is in fact
a result of a bigger piece of unverified code and that another liked list was
already shared with it (as in the example from the introduction).
This means that the unverified function can also modify that linked list, thus,
to soundly specify its behavior one has to add to the footprint the references in that linked list too.
Moreover, references that were shared through subsequent updates to the first
linked list also have to be added.
Using our labeling mechanism tames the complexity of globally tracking the
shared references and allows stating the footprint in a simple way.

\iflater
\ca{I tried to show it multiple times, instead of asking the reader to "imagine",
  and could not find a compact way to do it with a realistic example, that would allow us to 
  explain the other ideas of our work.}\ch{I'm also annoyed that this is not explained
  on the running example. Could this maybe be moved to the intro and merged into
  the explanation of the running example?}
\fi

The assumption that unverified code modifies only shareable references is true if
\begin{inlist}
  \item the unverified code cannot forge references (which we assume),
  \item we treat the references allocated by the unverified code as shareable references
    (which we do and explain in \autoref{sec:key-ucode}), and
  \item the verified code shares only shareable references with the unverified code.
\end{inlist}
To ensure this last point
we added the pre-condition for calling the unverified code that the passed reference has to be shareable
(item (\ref{spec:hw-pre}), line \ref{line:hwpre}).
This pre-condition in combination with a global invariant on the heap enforces
that the explicitly passed reference is shareable, and that
all references that are reachable from the reference are shareable too.
The global invariant on the heap ensures
that a shareable reference can point only to other shareable references.
This means that the \secrefstar{} user has to manually track only
the label of the root reference of a complex data structure (\EG the head of a linked list).
The global invariant is necessary to prevent accidental sharing by subsequent writes to shareable references (\EG the example from the introduction).
The user never has to prove directly the preservation of the global invariant though,
because the operations to manipulate references provided by Labeled References take care of that.
The only extra proof burden appears when labeling a reference as shareable or
writing to a shareable reference, one has to prove some conditions ensuring
that the affected reference will point only to shareable references after the
operation.\ch{please explain this on our examples; otherwise it's quite funny}

The \ls{witnessed} token used in the pre-condition (line \ref{line:hwpre})
is produced by the \ls{witness} operation of Monotonic State,
which provides two such operations:
(1)~\ls{witness}, allowing one to convert stable heap predicates 
into heap-independent ones,
and (2)~\ls{recall}, allowing one to recover an already witnessed predicate
for any future heap.
For example, being shareable is a stable property with respect to our preorder
that shareable references stay shareable. Thus if one knows that a reference
\ls{r} is shareable in a heap \ls{h} (\IE \ls{is_shareable r h}), one can
witness this and get a \newtext{heap-independent} token \ls{witnessed
(is_shareable r)}, which can be recalled later, say when the heap is \ls{h'}, to
prove that \ls{is_shareable r h'}.\footnote{For readers familiar with Separation
Logic, this is similar to invariants and their introduction and elimination
rules.} \newtext{For intuition, making \ls{is_shareable r} into a  heap-independent token \ls{witnessed
(is_shareable r)} can be thought of as defining a proposition, which states that there exists some past heap \ls{h} in which \ls{is_shareable r h} was true, and as \ls{is_shareable r} is stable with respect to any allowed heap changes, \ls{is_shareable r h'} is also true in any future heap \ls{h'} from \ls{h}.}

To be able to statically verify the specifications of the autograder (lines
\ref{line:agpre}-\ref{line:agpost:end}) we reiterate that giving a specification
to the unverified function is necessary.
Since the autograder calls into unverified code, it inherits the property
that it can modify shareable references, thus, its footprint
is defined as the union of all shareable references and
the singleton set of references containing only the grade (line \ref{line:agpost:end}).
The autograder's code stores the grade inside a monotonic reference
\newtext{(denoted by \ls{mref} on line \ref{line:agtype}), a feature of Monotonic State~\cite{preorders}
that allows putting a statically enforced preorder on the value stored in a reference.}
%
For the grade, the preorder ensures that the grade cannot be modified once it is set.
To be able to set the grade (on line \ref{line:ag-set-grade}), however, one has to know that the grade was not set by
calling into the homework, which we know because
the footprint of the call states that it modifies only shareable references, and we know
that the grade was private before the call. \newtext{The preorder on the grade is defined as follows:}
\begin{lstlisting}
let grade_preorder (g_0 g_1:option int):Type0=match g_0 with |None->True$\ $|Some v_0->Some? g_1/\v_0=Some?.v g_1
\end{lstlisting}

In \secrefstar{}, one cannot share a monotonic reference with the unverified code because
there would be no way to guarantee the preorder of the reference, but the verified code
can use monotonic references.
One can share only normal references, denoted by \ls{ref}, as long as they are labeled as shareable.
Normal references are just monotonic references with the trivial preorder.







\subsection{Representation for unverified code}\label{sec:key-ucode}

We give a representation for unverified code as a shallow embedding in \fstar{} satisfying two constraints:
first, obviously, it has to be a representation for {\em unverified} code
(\IE it requires only trivial logical reasoning),
and second, we need to be able to show that any piece of unverified code using this representation satisfies
the universal property that it modifies only shareable references.

We found a representation that satisfies both these constraints and that
is also unaware of the labeling mechanism, by using the effect \ls{MST}.
The representation uses abstract predicates as well as a version of the
reference-manipulating operations that mention only these abstract predicates in their specifications.
The intuition why this is an appropriate representation for {\em unverified} code in \fstar{}
is that one can sequence the new operations freely, because the operations
require and ensure only the abstract predicates, so one only has to carry the abstract predicates around
from post-conditions to pre-conditions, but one never has to do nontrivial logical reasoning.
\newtext{To make it even more apparent that it is an appropriate representation,
in the artifact we also have a syntactic representation,
a small deeply embedded \(\lambda\)-calculus with first-order ML-style references,
and we show a total (back)translation function in \fstar{} from any well-typed
syntactic \(\lambda\)-calculus expression to the shallowly embedded representation with abstract predicates.}

The type of the new reference-manipulating operations is presented in \autoref{fig:type-poperations}. \newremove{and
they}\newtext{They} use three abstract predicates in a way that allows the operations to still be freely sequenced:
\begin{inlist}
  \item a global invariant on the heap (\newremove{denoted by }$\inv$), required and guaranteed by all operations;
  \item a predicate on references (\newremove{denoted by }$\prref$) 
        constraining references stored in the heap by \ls{write} and \ls{alloc},
        \newtext{ensured of freshly allocated references using \ls{alloc},
        and ensured of references returned from the heap by \ls{read}};
  \item a preorder on heaps (\newremove{denoted by }$\hrel$) and a guarantee that for all our
        operations the heap evolves according to this preorder.
\end{inlist}
In \autoref{fig:type-poperations} we also use
the $\every$ combinator that takes a predicate on references
(in our case $\prref$) and a value of an inductive type,
then recursively decomposes the value by pattern matching.
For each reference, it applies the predicate, ensuring that all
reachable references satisfy \newremove{the given predicate.}\newtext{it.}

\begin{figure}
    \begin{lstlisting}
type poly_alloc_t ($\inv$:heap -> Type0) ($\prref$:ref 'a -> Type0) ($\hrel$:preorder heap) =
    init:'a -> MST (ref 'a) (fun h_0 -> $\inv$ h_0 /\ $\every$ $\prref$ init) (fun h_0 r h_1 -> h_0 $\hrel$ h_1 /\ $\inv$ h_1 /\ $\prref$ r)  
type poly_read_t ($\inv$:heap -> Type0) ($\prref$:ref 'a -> Type0) ($\hrel$:preorder heap) = 
    r:ref 'a -> MST 'a$\ $(requires (fun h_0 -> $\inv$ h_0 /\ $\prref$ r)) (ensures (fun h_0 v h_1 -> h_0 $\hrel$ h_1 /\ $\inv$ h_1 /\ $\every$ $\prref$ v))
type poly_write_t ($\inv$:heap -> Type0) ($\prref$:ref 'a -> Type0) ($\hrel$:preorder heap) =
    r:ref 'a -> v:'a -> MST unit (fun h_0 -> $\inv$ h_0 /\ $\prref$ r /\ $\every$ $\prref$ v) (fun h_0 _ h_1 -> h_0 $\hrel$ h_1 /\ $\inv$ h_1)
\end{lstlisting}
\caption{Types of the reference-manipulating operations polymorphic in $\inv$, $\prref$ and $\hrel$.
\newtext{The $\every$ combinator is defined using a type class and here it is elided that type \ls{'a} is constrained
by this type class.}}
\label{fig:type-poperations}
\end{figure}

\newtext{We call this representation, a \textit{polymorphic interface},}
and we used it for instance to write the
following unverified homework that sorts a linked list, which
takes as argument the three operations and combines them freely.\footnote{We
  elided \newtext{the use of fuel to have a} proof of termination, which is needed
  since for simplicity our work assumes that state is the only
  side effect. \newtext{The full implementation is in the artifact.}}

%
\begin{lstlisting}
type poly_student_hw = $\inv$:(heap -> Type0) -> $\prref$:(ref 'a -> Type0) -> $\hrel$:preorder heap ->
  (* set of operations that use the abstract predicates *)
  alloc : poly_alloc_t $\threep$ -> read : poly_read_t $\threep$ -> write: poly_write_t $\threep$ ->
  (* the actual type of the homework *)
  ll:ref (linked_list int) -> 
  MST (either unit err) (requires (fun h_0 -> $\inv$ h_0 /\ $\prref$ ll)) (ensures (fun h_0 r h_1->h_0$\hrel$h_1 /\ $\inv$ h_1 /\ $\every$ $\prref$ r))
let unverified_student_hw : poly_student_hw = fun my_alloc my_read my_write ->
  let rec sort (ll:ref (linked_list int)) =
    match my_read ll with | LLNil -> ()
    | LLCons x tl -> begin
      sort tl;
      match my_read tl with | LLNil -> ()
      | LLCons x' tl' -> if x <= x' then () else (
        let new_tl = my_alloc (LLCons x tl') in
        sort new_tl;
        my_write ll (LLCons x' new_tl))
    end
  in sort
\end{lstlisting}
The abstract predicates are also used for specifying the homework in the target 
language,\ch{+on line 6 above}, and we explain how \fstar{} shows that the 
homework satisfies its post-condition, even if it is abstract.
The first conjunct of the post-condition
is about the preorder on heaps $\hrel$, which is satisfied because all our 3 operations ensure
that the heap evolves according to $\hrel$,
which is a reflexive and transitive relation.
The second conjunct about global invariant $\inv$ is satisfied because it is
required in the pre-condition of \ls{poly_student_hw}, and then every
operation requires and ensures this invariant.
The third and last conjunct states that any reference that is returned has to satisfy predicate $\prref$.
This conjunct is satisfied because a reference can be allocated (which ensures $\prref$),
read from the heap (which ensures $\prref$),
or is received from the outside, which is also guaranteed because of the pre-condition
requiring that the input linked list satisfies $\prref$.
The pre-condition and the post-condition of \ls{poly_student_hw} ensure
that every reference that crosses the verified-unverified boundary satisfies the predicate $\prref$.

\newremove{
We therefore claim that this is an appropriate representation for stateful unverified code
shallowly embedded in \fstar{}.
To make it even more apparent, in the artifact we also have a syntactic representation,
a small deeply embedded lambda calculus with first-order ML-style references,
and we show a total (back)translation function in \fstar{} from any well-typed
syntactic lambda-calculus expression to the shallowly embedded representation with abstract predicates.}


%

Now time for some magic: we show that we can instantiate the \newremove{types}\newtext{polymorphic interface} we give to unverified
code, so that this code gets closer to something that can be safely linked with verified code.
We illustrate this instantiation step by step for \ls{poly_student_hw} above.
Instantiating the abstract invariant, $\inv$, lifts all the functions to the effect \ls{LR},
which is just a synonym for the effect \ls{MST} with an extra global invariant,
ensuring that shareable points to shareable,
so after instantiation we have the same effect
as in the type \ls{student_hw} from \autoref{fig:autograder_example}.
We instantiate 
$\prref$ with \ls{fun r -> witnessed (is_shareable r)},
which becomes the same pre-condition as in type \ls{student_hw}.
Then, we instantiate the preorder 
$\hrel$, with \ls{fun h_0 h_1 -> modif_only_shareable h_0 h_1 /\\ same_labels h_0 h_1},
which means we now have as post-condition the universal property
that unverified code modifies only shareable references,
which is also the first part of the post-condition of \ls{student_hw}.
So unverified code is instantiated during linking to obtain a type with concrete pre- and post-conditions,
\newtext{now having what we call an \textit{intermediate interface}.}
%
%
This is a first step towards bridging the gap between verified and unverified
code \newtext{(see \autoref{fig:overview})},
and in \autoref{sec:key-hoc} we show how we additionally use higher-order
contracts during the compilation of the verified program to fully bridge this gap.

But first, let us still mention that instantiating the three operations from
\autoref{fig:type-poperations} is straightforward.
We instantiate allocation by allocating a reference and then labeling it as shareable. 
%
And we instantiate \ls{read} and \ls{write} using the default operations of Labeled References,
which because of the specs given by the three concrete predicates,
they are now specialized for shareable references.
For \ls{read} an interesting interaction happens between $\every\ \prref\ $\ls{r} and
the global invariant to show that whatever was read from the heap is also shareable.
The $\every$ combinator stops when reaching \newtext{any} reference
because once it was established that a reference is shareable, the global invariant offers the
guarantee that all references reachable from this reference are also shareable.

\begin{figure}[h]
  \begin{tikzpicture}[auto, node distance=5cm]
  \draw
      node at (0, 0) {}
      node [block, rounded corners] (vps) {\( \small \begin{array}{cc}\textit{verified}\\\textit{program}\end{array} \)}
      node [below of=vps, node distance=0.8cm, align=center] (vps1) {\tiny strong\\[-1.5ex]\tiny interface}
      node [block, right of =vps, rounded corners] (vpi) {\( \small \begin{array}{cc}\textit{compiled}\\\textit{program}\end{array} \)}
      node [below of=vpi, node distance=0.8cm, align=center] (vpi1) {\tiny intermediate\\[-1.5ex]\tiny interface}
      node [block, right of =vpi, rounded corners] (upt) {\( \small \begin{array}{cc}\textit{unverified}\\\textit{code}\end{array} \)}
      node [below of=upt, node distance=0.8cm, align=center] (upt1) {\tiny polymorphic\\[-1.5ex]\tiny interface}
      ;
  \draw[-] [thin] (1.25, 0.75) -- (6.45, 0.75);
  \draw[-] [thin] (8.40, 0.75) -- (10.75, 0.75);
  \draw node at (7.425, 0.75) {\tiny target language};
  \draw node at (0.0, 0.75) {\tiny Labeled References};
  \draw[->](vps) -- node [above] { \small {\bf compile} } node [below,align=center] { \small add higher-\\[-1ex]\small order contracts } (vpi);
  \draw[to reversed-to reversed](vpi) -- node [above] { \small {\bf link} } node [below,align=center] { \small instantiate\\[-1ex]\small abstract predicates } (upt);
  \draw[-] [color=gray,thick, dotted] (1,-0.9) -- (1,0.75);
  \end{tikzpicture}
  \caption{\label{fig:overview}An overview of \secrefstar{}.\vspace{-1em}}
\end{figure}

\subsection{Higher-order contracts in \titsecrefstar}\label{sec:key-hoc}
After the instantiation, an unverified computation of type \ls{unverified_student_hw}
almost has the type\newtext{, which we call a \textit{strong interface},} \ls{student_hw}.
The only missing piece is a part of the post-condition (\autoref{spec:hw-post-list} in \autoref{sec:key-verification}),
which states that after the homework returns, if the run was successful, then the linked list
is sorted, has no cycles and contains the same values as before.
This part is enforced by \secrefstar{} using higher-order contracts~\cite{FindlerF02},
which are used to enforce only properties of shareable references or
values passed when unverified code gives control to the verified code.
Since the verified code has no insights into what the unverified code does with the shareable references
and with which values it gives them, no logical predicates can be established about them.
Therefore, it is the choice of the user if they want to strengthen the types and use higher-order
contracts, or if they want to manually establish the logical predicates by adding dynamic checks. The advantage of 
strengthening the specification is that if one would instead link with some verified program, then there would be no need for the dynamic checks,
and for unverified code, one would use \secrefstar{}.

We implement the idea of Stateful Contracts~\cite{FindlerF02} using 
wrapping and by implementing
two dual functions \ls{export} and \ls{import}: \ls{export} takes existing specs and converts them into dynamic checks
(\EG converts pre-conditions of arrows and refinements on the arguments to dynamic checks),
while \ls{import} adds dynamic checks to have more specs
(\EG adds dynamic checks to strengthen the post-conditions of arrows and to add refinements on the result).\ch{
  The parens make some sense to me, I wasn't able to follow the rest of the explanation!}
The functions \ls{export} and \ls{import} are defined in terms of each other to support higher-order functions.
In \secrefstar{}, when the verified program has initial control,
the \ls{import} function is used during both compilation and back-translation
(the dual of compilation, used for the secure compilation proof in \autoref{sec:secure-compilation}).
During compilation \ls{import} is used to add dynamic checks
to the verified-unverified boundary of the compiled program to protect it. During back-translation
\ls{import} is used to add dynamic checks to the unverified code so that it can be typed
as having the necessary specification. In the dual setting, when
the unverified code has the initial control, the \ls{export} function is used.\ch{used on/for what?}

Our implementation of higher-order contracts is verified not to modify the heap in any way
(it only reads from the heap).
%
%
\secrefstar{} also verifies if the
dynamic checks imply the specification that has to be enforced.
Since \fstar{} specifications are not directly executable,
\secrefstar{} leverages type classes to automatically infer
the check. If this fails though, the user is asked to manually
provide the check and prove that the check implies the specification,
which benefits from SMT automation.

\subsection{Encapsulated references}\label{sec:key-encapsulated}

We want to enable the verified program to mutate private references inside the callbacks
it passes to the unverified code.
Such references are not shared with the unverified code, but are modified indirectly by
the callback calls the unverified code makes.
To enable this use case, we introduce a third label, \ls{encapsulated}.
Only references labeled as private can become encapsulated,
and they remain encapsulated forever.
Since encapsulated references are used only in verified code, there is no need to
add a constraint that limits to which kind of references they point.

Encapsulated references are only directly modified by the verified program, but an unknown number of times,
since the unverified code can call back as many times as it likes.
The callback can also be called at later times, since it can be stashed away.
The advantage of encapsulated references is that one knows {\em how} they get modified,
thus, one can encapsulate {\em monotonic} references to take advantage of a preorder, and/or use refinement types
and/or dependent types for the value stored.

Encapsulated references are especially
useful when the unverified code has initial control. The unverified code can call multiple times
the verified program, but in a first-order setting the calls are independent, and
any reference allocated by the program gets lost when it gives control
back to the unverified code. 
If the program is higher-order though, it can have callbacks with
persistent state by using encapsulated references.
We illustrate this case in \autoref{fig:pngr_example} using the example of a pseudo-number generator
that has an initialization phase where it gets a seed as an argument and returns a callback
that generates a number on each call. To generate a new number for every call to the callback,
the program allocates an encapsulated reference, \ls{counter}, with which it keeps track of how many times it has been called.
This is a monotonic reference with the preorder that it can only increase.
Since the generator does not make any assumption about the unverified code (the pre-conditions are trivial),
\secrefstar{} does not add any dynamic checks during compilation.

\begin{figure}[h]
  \begin{lstlisting}[numbers=left,escapechar=\$,xleftmargin=13pt,framexleftmargin=10pt,numberstyle=\color{dkgray}]
val generate_nr : int -> int -> int (* a pure computation *)
let post = fun h_0 _ h_1 -> modif_shareable_and_encaps h_0 h_1 /\ same_labels h_0 h_1
let prng (seed:int) 
  : LR (unit -> LR int (requires (fun _ -> True))$\ $(ensures post)) (requires (fun _ -> True))$\ $(ensures post) =
    let counter : mref int (<=$\!$) = alloc 0 in label_encapsulated counter;
    (fun () -> counter := !counter + 1; generate_nr seed (!counter))
\end{lstlisting}
\caption{The illustrative example of a pseudo-number generator}
\label{fig:pngr_example}
\end{figure}

This last example also reveals that the presentation in \autoref{fig:autograder_example}
and \autoref{sec:key-ucode} is simplified to improve readability, and in fact, \secrefstar{}
uses the more general predicate \ls{modif_shareable_and_encaps}
(instead of the more specific \ls{modif_only_shareable}) as a post-condition for unverified code.

\section{Monotonic State represented using a Dijkstra Monad}
\label{sec:monotonic-state}

We first introduce the Monotonic State monadic effect \ls{MST} on which Labeled References builds.
We present the interface of Monotonic State as in the original work 
of \citet{preorders},
after which we present our contribution involving defining \ls{MST} as a Dijkstra
monad with a free-monad-based representation (\autoref{sec:mst-representation}),
and finally we present a mechanized soundness proof for \ls{MST} (\autoref{sec:mst-soundness}).

\subsection{The interface of Monotonic State}
\label{sec:mst-high-level}
Monotonic State is an \fstar effect characterizing stateful computations whose
state evolves according to a given preorder, with additional features to make
specifying and verifying such computations more convenient. In particular, given
some predicate on states that is stable for any state changes respecting the
preorder, one can witness that the predicate \newtext{\ls{p}} holds in one part
of a program, resulting in a state-independent token \newtext{\ls{witnessed p}}
witnessing this action, and then use this token to recall that the predicate
\newtext{\ls{p}} still holds at arbitrary later program points \newtext{and
in later states}.

\citet{preorders} considered a general \newtext{notion of} Monotonic State\newremove{notion}, in which
the effect is parameterized by an arbitrary type of states and a
preorder on them.
In this paper, we pick as states monotonic heaps (type \ls{heap}) provided by \fstar{},
with the preorder $\preceq$ \newtext{on} heaps enforcing that allocated references stay allocated forever,
and monotonic references preserve their preorders.
With this, we obtain our \ls{MST} effect, which models
first-order\footnote{ Providing a monadic representation that allows storing
  effectful functions given \fstar{}'s predicative, countable hierarchy of
  universes is an open research problem, further detailed in
  \autoref{sec:mst-representation} and \autoref{sec:future}.}
ML-style mutable references.

The interface of \ls{MST}, presented in \autoref{fig:monotonic-state},
hides the concrete representation of the
heap and references, and the implementation of the
operations that manipulate them.
The interface provides proof that the
operations preserve the preorder
through a series of lemmas (which we elide).


The basic reference type is that of monotonic references, \ls{mref a rel}, which
have individual preorders enforced by the preorder on the heaps $\preceq$.
ML-style references \ls{ref a} are a special
case for the \newremove{trivial}\newtext{total} preorder that relates all the \ls{a}-values.
Behind the interface, the references have concrete addresses which are hidden
from the user to ensure their unforgeability.
For reasoning, however, the interface exposes a function \ls{addr_of} that
returns an \ls{erased}, computationally irrelevant
value that can only be used in specifications, but not branched on in programs,
ensuring the abstraction boundary is respected.
Labeled References uses \ls{addr_of} in the encoding of the labeling mechanism.

The \ls{MST} effect comes with operations for reading from, writing to, and allocating fresh references.
The pre-condition of \ls{write} requires the new value to be
related to the value of the reference in the \newremove{previous}\newtext{current} heap. This
ensures that the reference evolves according to the preorder of the reference. The
\ls{write} and \ls{alloc} operations both contain a \ls{modifies} clause in
their post-conditions, which is the usual way of specifying that only specific
references are allowed to be modified (the footprint of the call\newremove{, no labels yet}).
\ifsooner\ch{Can we cite something for modifies?}\fi
Specifically, for a set \ls{s} of
addresses, \ls{modifies s h_0 h_1} holds if any references contained in \ls{h_0}
and not contained in \ls{s} remain unmodified. Above, \ls{!\{r\}} is \fstar's
syntax for the singleton set containing the address of \ls{r}. Notice that the
pre-conditions of \ls{read} and \ls{write} require the reference to be contained,
yet for simplicity, we often elide this detail from the other sections of the paper.

\newtext{For users of \ls{MST}, it is important to note that both the original
work of \citet{preorders} and our work on \ls{MST} rely for their soundness on
the logical \ls{witnessed} tokens being abstract, unforgeable, and only
constructible using the \ls{witness} operation of the \ls{MST} effect.
Therefore, one must avoid using strong abstraction and parametricity breaking
classical axioms in code written using the \ls{MST} effect, such as the axiom of
strong excluded middle that equates \ls{prop} and \ls{bool} and allows one to
test the validity of arbitrary propositions, because that leads to an
inconsistency in \fstar{}.\footnote{Monotonic State is Incompatible with Strong
Excluded Middle:
\href{https://github.com/FStarLang/FStar/issues/2814}{https://github.com/FStarLang/FStar/issues/2814}}
Avoiding such axioms in user code is standard practice in \fstar-based projects
because of their abstraction- and parametricity-breaking nature. They are
typically used very carefully only by library designers.}

\begin{figure}
\begin{lstlisting}
val heap : Type0
val mref: a:Type0 -> rel:preorder a -> Type0
type ref (a:Type0) = mref a (trivial_preorder a)
val contains: #a:Type0 -> #rel:preorder a -> heap -> mref a rel -> Type0
val sel: #a:Type0 -> #rel:preorder a -> h:heap -> r:mref a rel{h `contains` r} -> a
val heap_write: #a:_ -> #rel:_ -> h:heap -> r:mref a rel{h `contains` r} -> x:a{sel h r `rel` x} -> heap
val heap_alloc: #a:Type0 -> rel:preorder a -> heap -> a -> mref a rel & heap
val addr_of: #a:Type0 -> #rel:preorder a -> mref a rel -> erased pos
val next_addr: heap -> erased pos
let $(\preceq)$ h_0 h_1 = forall a rel (r:mref a rel). h0 `contains` r ==> (h1 `contains` r /\ (sel h0 r) `rel` (sel h1 r))
val alloc: #a:Type -> #rel:preorder a -> init:a -> 
  MST (mref a rel) (requires (fun h_0 -> True)) 
                  (ensures (fun h_0 r h_1 -> h_0 $\preceq$ h_1 /\ fresh r h_0 h_1 /\ modifies !{} h_0 h_1 /\ 
                                      addr_of r == next_addr h_0))
val read: #a:_ -> #rel:_ -> r:mref a rel -> MST a (requires (fun h_0 -> h_0 `contains` r)) 
                                           (ensures (fun h_0 v h_1 -> h_0 == h_1 /\ v == sel h_1 r))
val write: #a:_ -> #rel:_ -> r:mref a rel -> v:a -> 
  MST unit (requires (fun h_0 -> h_0 `contains` r /\ (sel h_0 r) `rel` v))
           (ensures (fun h_0 () h_1 -> h_0 $\preceq$ h_1 /\ modifies !{r} h_0 h_1 /\ equal_dom h_0 h_1))
val witnessed: p:(heap -> Type0) -> Type0
let stable_heap_predicate = p:(heap -> Type0){forall h_0 h_1. p h_0 /\ h_0 $\preceq$ h_1 ==> p h_1}
val witness: p:stable_heap_predicate -> MST unit (requires (fun h_0 -> p h_0)) 
                                              (ensures (fun h_0 () h_1 -> h_0 == h_1 /\ witnessed p))
val recall: p:stable_heap_predicate -> MST unit (requires (fun _ -> witnessed p)) 
                                            (ensures (fun h_0 () h_1 -> h_0 == h_1 /\ p h_1))                  
\end{lstlisting}
\caption{Interface of Monotonic State}\label{fig:monotonic-state}
\end{figure}

\subsection{\texorpdfstring{\ls{MST}}{MST} as a Dijkstra monad with a free-monad-based representation}
\label{sec:mst-representation}

\newcommand{\D}{\mathcal{D}} 

In the work of \citet{preorders}, the \ls{MST} effect and the operations
\newtext{discussed} above were simply assumed \newtext{in \fstar{} (using the
\ls{assume} keyword)} and not given \newtext{any representation or} definitions
within \fstar. \newremove{We}\newtext{In this paper, we} improve on this \newtext{situation} by giving a free-monad-based
\newremove{representation}\newtext{\emph{representation}} for \ls{MST} in
\fstar{}, from which it can be defined as a Dijkstra monad \cite{dm4all}.
\newtext{While the main definition of the \ls{MST} effect remains axiomatic, via
the monad morphism \lstinline{theta} we define below,\footnote{\newtext{The axiomatic
nature of the effect definition is due to the use of a free monad as its
representation, whose elements are inert trees and where no actual computation
takes place. As such, the specifications assigned by the monad morphism play the
same role as the assumed types of operations in the existing work, but now we also
have a representation to work with.}} the existence of a free-monad-based
representation enables us to separately define a computational state-passing semantics for \ls{MST} computations, and use that semantics to for the first time prove the soundness of the \ls{MST} effect interface
within \fstar{} itself (\autoref{sec:mst-soundness}), whereas \citet{preorders}
gave their soundness proof only on paper because there was no
representation of \ls{MST} computations to work with inside \fstar.}

We recall that a Dijkstra monad is a
monad-like structure indexed by specifications, which themselves often
come from some monad, called a specification monad. For a Dijkstra monad $\D$,
the type $\D\, \alpha\, w$ classifies computations that return $\alpha$-typed values
and satisfy a specification $w : W\, \alpha$, where $W$ is an
aforementioned specification monad. A common way to define a
Dijkstra monad $\D$ \cite{dm4all} is to identify (i) a monad $M$ to represent
computations classified by $\D$, (ii) an ordered monad $W$ that models
specifications for the computations in $M$, where the order formalizes when one
specification implies another, and (iii) a monad morphism $\theta : M \to W$
that infers for every computation its ``most precise''
specification. The Dijkstra monad $\D$ is then defined as $\D\, \alpha\, w =
m:M\, \alpha\{ \theta(m) \sqsubseteq w \}$, formalizing the idea that $\D\,
\alpha\, w$ classifies computations $m : M\, \alpha$ that satisfy the given
specification $w$. Below we work out such a Dijkstra monad for \ls{MST}.

\paragraph{The computational monad.}
We begin by defining the computational monad as a free monad:\footnote{We omit a
constructor that is needed in \fstar to account for partiality coming from
pre-conditions \cite{pdm4all}.}
\begin{lstlisting}
type mst_repr (a:Type u#a) : Type u#(max 1 a) =
| Return : a -> mst_repr a  
| Read : #b:Type0 -> #rel: preorder b -> r: mref b rel -> k:(b -> mst_repr a) -> mst_repr a
| Write : #b:Type0 -> #rel: preorder b -> r: mref b rel -> v:b -> k:mst_repr a -> mst_repr a
| Alloc : #b:Type0 -> #rel: preorder b -> init: b -> k:(mref b rel -> mst_repr a) -> mst_repr a
| Witness : p:stable_heap_predicate -> k:(unit -> mst_repr a) -> mst_repr a
| Recall : p:stable_heap_predicate -> k:(unit -> mst_repr a) -> mst_repr a
\end{lstlisting}
As is standard with free monads, elements of \ls{mst_repr} denote computations
modelled as trees whose nodes are operation calls and whose leaves contain
return values. Observe that the continuations of \ls{Witness} and
\ls{Recall} accept simply \ls{unit}-typed values---at the level of the
computational representation no \ls{witnessed p} tokens are passed to the
continuation, they only appear and are passed around in the logical
specifications. The return and bind
operations of this free monad are standard.

Let us also discuss why we cannot store higher-order effectful functions.
Notice that our monad transforms a \ls{Type u#a} into a \ls{Type u#(max 1 a)},
this is because the operations \ls{Read}, \ls{Write}, and \ls{Alloc} quantify
over \ls{Type0}. Thus, it increases the universe level,
which makes functions represented using this monad unstorable in our references.
There is no known monadic representation that allows storing effectful
functions given \fstar{}’s predicative, countable hierarchy of universes.
We support references that point to other references but cannot point to effectful functions.

 
\paragraph{The specification monad}we use, \ls{mst_w}, is
the monad of weakest pre-conditions:
\begin{lstlisting}
let mst_w (a:Type) = wp:(post:(a -> heap -> Type0) -> pre:(heap -> Type0))
  {forall post post'. (forall v h_1. post v h_1 ==> post' v h_1) ==> (forall h_0. wp post h_0==>wp post' h_0)}
\end{lstlisting}
A specification\newremove{\ls{wp:mst_w a}} is a function that transforms any
post-condition\newremove{\ls{post:a -> heap -> Type0}} over return values
and final heaps into a
pre-condition\newremove{\ls{wp post:heap -> Type0}} over initial heaps.
The refinement type
restricts us to monotonic predicate transformers. The unit of this monad
requires the post-condition to hold of the initial state, as \ls{mst_w_return v
post h_0 = post v h_0}, and the bind operation composes two predicate
transformers, as \ls{mst_w_bind wp wp' post h_0 = wp (fun v h_1 -> wp' v post
h_1) h_0}. The order \ls{[=} on \ls{mst_w} is defined so that \lstinline{wp [=$\xspace$ wp'}
holds when \ls{forall post h_0 . wp' post h_0 ==> wp post h_0}.

\paragraph{The monad morphism and the Dijkstra monad.}
As the final ingredient for defining a Dijkstra monad, we define a monad
morphism \lstinline{theta: mst_repr 'a -> mst_w 'a} by recursion on the tree
structure of \ls{mst_repr 'a}\newtext{, which assigns 
weakest-pre-condition-based specifications to \ls{MST} operations and 
computations}. For illustration,
the cases for \ls{return}, \ls{witness}, and \ls{recall} are given as follows.
\begin{lstlisting}
let rec theta m = match m with | Return v -> mst_w_return v | Read r k -> ... | Write r v k -> ... | Alloc init k -> ... 
| Witness p k -> mst_w_bind (fun post h -> p h /\ (witnessed p ==> post () h)) (fun v -> theta (k r))
| Recall p k -> mst_w_bind (fun post h -> witnessed p /\ (p h ==> post () h)) (fun v -> theta (k r))
\end{lstlisting}  
The unit of \ls{mst_repr} is mapped to the unit of \ls{mst_w}, and for the
operations the bind of \ls{mst_w} is used to compose the specificational
interpretation of operations with the recursive application of \ls{theta} to
their continuations. For instance, the \ls{mst_w} interpretation of
\ls{Witness} matches exactly the specification of \ls{witness} from
\autoref{sec:mst-high-level} using the correspondence between pre- and
post-conditions and predicate transformers \cite{dm4all}. The same pattern repeats
for all the other operations. 

The Dijkstra monad is defined by refining \ls{mst_repr} computations using \ls{mst_w} predicates via \ls{theta}:
\begin{lstlisting}
type mst_dm (a:Type) (wp:mst_w a) = m:mst_repr a{theta m [=$\xspace$ wp}
\end{lstlisting}

\paragraph{The layered effects.}
So far \ls{mst_dm} is just an ordinary type in \fstar. In order to turn it into
the \ls{MST} effect discussed earlier to support direct-style programming (where
monadic returns and binds are implicit), we use a mechanism \newremove{to
extend}\newtext{for extending} \fstar with new effects \cite{indexedeffects}. 


\begin{lstlisting}
effect MSTATE (a:Type) (w:mst_w a) with { repr = mst_dm; return = mst_dm_return; ... }
effect MST (a:Type) (pre:heap -> Type0) (post:heap -> a -> heap -> Type0) = 
  MSTATE a (fun p h_0 -> pre h_0 /\ (forall v h_1 . h_0 $\hrel$ h_1 /\ post h_0 v h_1 ==> p v h_1))
\end{lstlisting}
The last effect, \ls{MST}, is derived as a pre- and post-condition based effect synonym~\cite{dm4all},
and is the one we
discussed in \autoref{sec:mst-high-level}. In \fstar we mark \ls{MSTATE} (and
thus also the derived effects) \ls{total}, \ls{reifiable}, and \ls{reflectable}.
Since our monadic representation is based on an inductive type,
we indicate to \fstar{} that our effect is total.
Totality means that any programs that typecheck against these effects
are guaranteed to terminate, and vice versa, any programs we write in these
effects have to be proven to terminate. Reification allows us to reveal the
underlying free-monad-based representation of these effects in proofs, and
reflection allows us to reflect the free monad operation calls to operations
in these effects, with the specification of the reflected code determined by
\ls{theta}.
%

\subsection{Soundness of the free-monad-based Dijkstra monad for Monotonic State}\label{sec:mst-soundness}

\iffull
As the original work of \citet{preorders}
preceded the defining of Dijkstra
monads from monad morphisms \cite{dm4all} and the extension of \fstar with
corresponding effects \cite{indexedeffects}, 
\else
\newremove{In}\newtext{As noted earlier, in} the work of \citet{preorders}
\fi
the \ls{MST} effect \newremove{was simply axiomatized}\newtext{interface was
simply assumed} in \fstar, with its computational interpretation defined and the
soundness of its axiomatization proved\newremove{, both} \newtext{only} on
paper. \newtext{Also, the proof of \citet{preorders} only applied to a $\lambda$-calculus much less powerful than full \fstar.} \newremove{Above we defined such a computational interpretation in \fstar
and now give a soundness proof that the specifications computed with \ls{theta}
from the free monad representation are computationally satisfied.} \newtext{To improve on this, we now use the free-monad-based representation of \ls{MST} to define a
computational state-passing semantics for \ls{MST} and use it to give a
soundness proof in \fstar that the specifications computed with \ls{theta} from the free
monad representation are computationally satisfied.}
A soundness proof is desirable because
choosing a specification for the free monad operations in \ls{theta} is not
necessarily the same as being sound with respect to the intended computational behavior.


At the core of our soundness proof is the idea that the \ls{witnessed} tokens
are merely a convenience for communicating logical information from one part of
a program to another one, and any program that is provable using \ls{witnessed}
tokens ought to also be provable without them.
%
To be able to vary the relevance of \ls{witnessed} tokens, we consider a variant of \ls{theta}, written 
\lstinline{theta}$_{\!\textsf{w}}$, 
which is parametric in the tokens \ls{w : (heap -> Type0) -> Type0} 
used in the specifications of \ls{Witness} and \ls{Recall}, but is otherwise defined like \ls{theta}.
We also define a predicate \ls{witnessed_before} that captures which
stable heap predicates are recalled in a computation, but were not witnessed in the same computation.
In particular, when we can prove \ls{empty `witnessed_before` m}, then any
\ls{witnessed p} token recalled in \ls{m:mst_repr 'a} had to originate from an earlier call
to \ls{witness p} in \ls{m}, which is a property of closed programs.
\begin{lstlisting}
let rec witnessed_before (preds:set stable_heap_predicate) (m:mst_repr 'a) : Type0 =
  match m with | Return _ -> True | Read r k -> ... | Write r v k -> ...
  | Alloc init k -> forall r. preds `witnessed_before` (k r) 
  | Witness pred k -> (union preds (singleton pred)) `witnessed_before` (k ())
  | Recall pred k -> pred `mem` preds /\ preds `witnessed_before` (k ())
\end{lstlisting}  

We define the state-passing semantics \ls{run_mst} for a closed program \ls{m}.
We refine \ls{m} as being verified to satisfy a specification \ls{wp},
but we instantiate the \ls{w} parameter of
\ls{theta}$_{\!\textsf{w}}$ with the trivial tokens 
\ls{witnessed_trivial p = True}, 
modelling that witnessing and recalling are computationally irrelevant.
Using refinement types, we require that the weakest pre-condition \ls{wp post} holds of the initial heap,
and we ensure that the post-condition \ls{post} holds of the return value and the final heap.
During the recursive traversal of \ls{m},
\iffull similarly to the work of \cite{preorders},\fi
the heap is augmented with a set of predicates witnessed during execution
and with a refinement type ensuring that these predicates hold for the given heap.
The refinement holds as the heap evolves, since the operations that manipulate the heap
preserve the preorder \newremove{for} which the witnessed predicates are stable \newtext{for}.
During the traversal, the refinement with \ls{witnessed_before} ensures
that any predicates witnessed are contained in the set of predicates of the augmented heap.
\begin{lstlisting}
type heap_w_preds = h:heap & preds:(set stable_heap_predicate){forall pred. pred `mem` preds ==> pred h}
let rec run_mst_with_preds (wp:state_wp 'a) (m:mst_repr 'a{theta$_{\!\textsf{witnessed\_trivial}}$ m [= wp}) 
  (post:('a -> heap -> Type0)) (hp_0:heap_w_preds{wp post (fst hp_0) /\ (snd hp_0) `witnessed_before` m}) 
: (r:'a & hp_1:heap_w_preds{post r (fst hp_1)}) 
= match m with | Return v -> (v, hp_0) | Read r k -> ... | Write r v k -> ...
  | Alloc #b #rel init k -> let (r,h_1) = heap_alloc rel (fst hp_0) init in
      run_mst_with_preds (theta$_{\!\textsf{witnessed\_trivial}}$ (k r)) (k r) post (h_1,snd hp_0)
  | Witness p k -> let lp = (snd hp_0) `union` (singleton p) in
      run_mst_with_preds (theta$_{\!\textsf{witnessed\_trivial}}$ (k ())) (k ()) post (fst hp_0, lp)
  | Recall p k -> run_mst_with_preds (theta$_{\!\textsf{witnessed\_trivial}}$ (k ())) (k ()) post hp_0
let run_mst (wp:state_wp 'a) (m:mst_repr 'a{theta$_{\!\textsf{witnessed\_trivial}}$ m [= wp /\ empty `witnessed_before` m}) 
  (post:('a -> heap -> Type0)) : h_0:heap{wp post h_0} -> r:'a & h_1:heap{post r h_1}
= fun h_0 -> let (r, hp_1) = run_mst_with_preds wp m post (h_0, empty) in (r, fst hp_1)
\end{lstlisting}

Using \ls{run_mst}, we define the soundness theorem
which states that if a closed program \ls{m} is well-typed against a specification that uses
abstract \ls{witnessed} tokens (modelled using universal quantification), then for any
post-condition \ls{post} and initial heap \ls{h_0}, if the weakest
pre-condition \ls{wp post} holds in \ls{h_0}, the post-condition holds for
the return value \ls{v} in the final heap \ls{h_1}.
\begin{lstlisting}
let soundness (m:mst_repr 'a) (wp:state_wp 'a) 
  : Lemma (requires ((forall w. theta$_{\!\textsf{w}}$ m [= wp) /\ empty `witnessed_before` m)) 
          (ensures  (forall post h_0. wp post h_0 ==> (let (v,h_1) = run_mst wp m post h_0 in post v h_1))) = ()  
\end{lstlisting}

\newtext{The soundness proof is not completely axiom-free because it relies on a
library for the \ls{witnessed} tokens that is defined in terms of an axiomatic
presentation of hybrid modal logic \cite{reed09} in \fstar.}

\section{Labeled references}
\label{sec:labeled-references}

We now explain how we build our reference labeling mechanism on top of Monotonic
State.

\subsection{The labels}

As illustrated in \autoref{sec:keyideas},  
in \secrefstar we work with three labels for references:
\begin{lstlisting}
type label = | Private | Shareable | Encapsulated
\end{lstlisting}
When a reference is allocated, it starts labeled with \ls{private}.
The user can then relabel the reference either with \ls{shareable} or \ls{encapsulated}.
Labeling a reference with \ls{shareable} or \ls{encapsulated} is permanent. We ensure this
using the following preorder on reference labels:
\begin{lstlisting}
let $\hrel_{\text{\sffamily lab}}$ : preorder label = fun l l' -> match l , l' with
  | Private , _ | Shareable , Shareable | Encapsulated , Encapsulated -> True | _ , _ -> False
\end{lstlisting}

To track references' labels and how they evolve in programs, we assume a ghost top-level monotonic reference\footnote{
  \newtext{We prefer to assume the top-level reference to avoid precisely modeling ghost state~\mbox{\cite{HOghoststate,pulsecore}}.}}
that stores a map from the addresses of references to their labels
\begin{lstlisting}
type label_map_t = erased (mref (pos -> erased label) (fun m_0 m_1 -> forall r. (m_0 r) $\hrel_{\text{\sffamily lab}}$  (m_1 r)))
assume val label_map : label_map_t
\end{lstlisting}
The heap relation $\preceq$ defined in \autoref{fig:monotonic-state}
from \autoref{sec:monotonic-state} ensures
that \ls{label_map} can only evolve according to $\hrel_{\text{\sffamily lab}}$, which in turn is what makes being \ls{shareable} or \ls{encapsulated} a stable property of references: 
\begin{lstlisting}  
let is_shareable : mref_stable_heap_predicate = fun #a #rel (r:mref a rel) h ->
  h `contains` label_map (* necessary to show stability *) /\ Shareable? ((sel h label_map) (addr_of r))
let is_encaps : mref_stable_heap_predicate = fun #a #rel (r:mref a rel) h ->
  h `contains` label_map (* necessary to show stability *)  /\ Encapsulated? ((sel h label_map) (addr_of r))
\end{lstlisting}
where \ls{mref_stable_heap_predicate} is an \ls{mref}-parameterized stable heap
predicate. In contrast, as \ls{private} references can get 
relabeled, \newremove{with other labels,}%
being \ls{private} is {\em not} a stable property of the heap:
\begin{lstlisting}
let is_private_addr : pos -> heap -> Type0 = fun r h -> Private? ((sel h label_map) r)
let is_private : mref_heap_predicate = fun #a #rel (r:mref a rel) h -> is_private_addr (addr_of r) h
\end{lstlisting}

We mark the contents of the map and the map itself as \ls{erased}, to on
the one hand ensure that labels can only be used in specifications and programs
cannot branch on them, and on the other hand, to avoid passing \ls{label_map} to
computationally relevant operations that expect non-erased arguments. The latter
allows \ls{label_map} to be erased in extracted code (as \ls{erased} is
extracted to \ls{unit}).

Finally, we find it useful to define variants of the \ls{modifies}
clause discussed in \autoref{sec:monotonic-state} to specify that the footprint
of a computation modifies only \ls{shareable} or \ls{encapsulated}, or both kinds
of references:
\begin{lstlisting}
let modif_only_shareable_and_encaps (h_0:heap) (h_1:heap) : Type0 = forall a rel (r:mref a rel).
  (~(eq_addrs r label_map) /\ ~(is_shareable r h_0 \/ is_encaps r h_0)) ==> sel h_0 r == sel h_1 r
let modif_shareable_and (h_0:heap) (h_1:heap) (s:set nat) : Type0 = forall a rel (r:mref a rel).
  (~(eq_addrs r label_map) /\ ~(is_shareable r h_0 \/  (addr_of r) `mem` s)) ==> sel h_0 r == sel h_1 r
\end{lstlisting}

\subsection{The global invariant for shareable references}\label{sec:global-invariant}

We introduced our global invariant on the heap, that \ls{shareable} references can only point to \ls{shareable} references,
in~\autoref{sec:key-verification}.
To formally state it in \autoref{sec:lr-effect}, we want to quantify over all references of all types that are labeled as \ls{shareable} and
check that the stored value contains only \ls{shareable} references.
A convenient way to state the invariant is using a deep embedding of the types a \ls{shareable} reference can store values of.
\newremove{These are either base types, sums or products of them, other references storing values of such types, and data structures built from them,}
\newtext{These are built from base types, sums, products and references,}
collectively known in the literature as full ground types \cite{MurawskiT12,MurawskiT18,KammarLMS2017}.
\newtext{Arrows are not part of the embedding since our heap is only first-order (see~\autoref{sec:mst-representation}).}
The embedding is given by:
\begin{lstlisting}
type full_ground_typ = | SUnit | SInt | SBool
  | SSum : full_ground_typ -> full_ground_typ -> full_ground_typ
  | SPair : full_ground_typ -> full_ground_typ -> full_ground_typ
  | SRef : full_ground_typ -> full_ground_typ
  | SLList : full_ground_typ -> full_ground_typ
\end{lstlisting}
and comes with the evident recursive inclusion 
\ls{to_Type : full_ground_typ -> Type0} into \ls{Type0} types.

\newremove{It is important to emphasize that this deep embedding is only used for stating
the global invariant about \ls{shareable} references and to be able to do proofs about
its preservation, but it does not limit the capacity of the heap to store values
of more general types in universe \ls{Type0} (recall \mbox{\autoref{fig:monotonic-state}}).
Only certain
operations (\EG \ls{label_shareable} in
\autoref{sec:lr-effect}) need their type arguments to be given as
\ls{full_ground_typ} instead of \ls{Type0}, to help \fstar with proving that the
global invariant is preserved.}

To check that a stored value contains only \ls{shareable} references,
we use the deep embedding to define a  
predicate \ls{forall_refs} which checks that every reference in a given value of \ls{full_ground_typ} type 
satisfies a predicate. We then lift this predicate to properties of references with respect to heaps.
As these predicates are parametric in \ls{pred}, we can instantiate them
with different predicates. 
\begin{lstlisting}
let rec forall_refs (pred:mref_predicate) (#t:full_ground_typ) (x:to_Type t) : Type0 =
  let rcall #t x = forall_refs pred #t x in
  match t with | SUnit | SInt | SBool -> True
  | SSum t_1 t_2 -> match x with | Inl x' -> rcall x' | Inr x' -> rcall x'
  | SPair t_1 t_2 -> rcall x._1 /\ rcall x._2
  | SRef t' -> pred x
  | SLList t' -> match x with | LLNil -> True | LLCons v xsref -> rcall v /\ pred xsref
let forall_refs_heap (pred:mref_heap_predicate) h x = forall_refs (fun r -> pred r h) x
\end{lstlisting}
\ca{\ls{forall_refs} is not used anywhere. only \ls{forall_refs_heap} is used. lets see if in 5 we're going to use it}

Finally, to be able to define the invariant, we transitively close these predicates on a heap:
\begin{lstlisting}
let trans_forall_refs_heap (h:heap) (pred:mref_stable_heap_predicate) =
  forall (t:full_ground_typ) (r:ref (to_Type t)). h `contains` r /\ pred r h ==> forall_refs_heap pred h (sel h r)
\end{lstlisting}
This general version we apply to
\ls{is_shareable} and \ls{contains} predicates, \EG so as to specify that every \ls{shareable}
reference \ls{r} points to a value \ls(sel h r) that contains only \ls{shareable} references.

\subsection{The \ls{LR} effect}\label{sec:lr-effect}

We define the Labeled References effect \ls{LR} on top of the Monotonic State
effect \ls{MST} by restricting \ls{MST} computations with a global invariant that
enforces the correct usage of shareable references and the ghost state storing
reference labels. The invariant is defined as a conjunction of conditions: 
\da{Do we want to use some nice calligraphic notation for this invariant?}
\begin{lstlisting}
let contains_pred #a #rel (r:mref a rel) h = h `contains` r
let lr_inv (h:heap) : Type0 =
  trans_forall_refs_heap h contains_pred /\ (* contained references point to contained references *)
  trans_forall_refs_heap h is_shareable /\ (* shareable references point to shareable references *)
  h `contains` label_map /\ (* the labeling map is contained in the heap *)
  is_private (label_map) h /\ (* the labeling map remains private *)
  (forall r. r >= next_addr h ==> is_private_addr r h) (* all yet to be allocated references are labeled as private *)
\end{lstlisting}
The last condition ensures that any freshly allocated reference is \ls{private}
by default and will only become \ls{shareable} by explicitly labeling it so,
allowing for fine-grained control over which references can cross the
verified-unverified code boundary. This also applies to \ls{encapsulated} references.

The \ls{LR} effect is then defined on top of \ls{MST} by packaging the invariant
both into pre- and post-conditions. By being an effect synonym as opposed to a
new effect, we can seamlessly coerce any \ls{MST} computation into \ls{LR}, as
long as it satisfies the invariant, something that is very helpful for linking
verified code with unverified code (see \autoref{sec:key-ucode}).
In detail, \ls{LR} is defined in \fstar as follows:
\begin{lstlisting}
effect LR (a:Type) pre post = MST a (requires (fun h_0 -> lr_inv h_0 /\ pre h_0)) 
                                 (ensures (fun h_0 r h_1 -> lr_inv h_1 /\ post h_0 r h_1))
\end{lstlisting}

By providing the \ls{MST} operations we discussed in
\autoref{sec:monotonic-state} with suitable stronger pre- and post-conditions, we
can lift them to the \ls{LR} effect. For allocation, the \ls{LR}
operation \ls{lr_alloc #a #rel init} additionally requires that if the type
\ls{a} happens to be a \ls{full_ground_typ}, then every reference contained in the
initial value \ls{init} has to be contained in the heap. In its post-condition,
\ls{lr_alloc} additionally guarantees that the freshly allocated reference is
labeled \ls{private} and no other references become \ls{shareable} during its allocation. Modulo
the use of lemmas proving that allocation of references preserves \ls{lr_inv}, 
\ls{lr_alloc} is defined simply as
\ls{alloc}, with the type
\begin{lstlisting}
val lr_alloc #a (#rel:preorder a) (init:a) : LR (mref a rel)
    (requires (fun h_0 -> forall t. to_Type t == a ==> forall_refs_heap contains_pred h_0 #t init))
    (ensures (fun h_0 r h_1 -> alloc_post #a #rel init h_0 r h_1 /\ is_private r h_1 /\ becomes_shareable !{} h_0 h_1))
\end{lstlisting}

The specification of \ls{LR}'s \ls{lr_read} operation remains the same
as for \ls{MST}'s \ls{read} (see \autoref{fig:monotonic-state}). For
\ls{lr_write #a #rel r v}, we strengthen the pre-condition by additionally
requiring that the reference is different from the ghost state holding the
labeling map, and that if \ls{a} happens to be a \ls{full_ground_typ}, then every
reference contained in \ls{v} has to be contained in the heap, and that if \ls{r}
happens to be a \ls{shareable} reference, then every reference contained in
\ls{v} has to be \ls{shareable} as well. In return, the post-condition additionally
guarantees that no new references become \ls{shareable} while writing into \ls{r}. As
with \ls{lr_alloc}, modulo the use of some lemmas, \ls{lr_write} is defined as \ls{write}.
\begin{lstlisting}
val lr_write #a (#rel:preorder a) (r:mref a rel) (v:a) : LR unit
    (requires (fun h_0 -> write_pre #a #rel r v /\ ~(eq_addrs r label_map) /\ 
      (forall t. to_Type t == a ==> forall_refs_heap contains_pred h_0 #t v /\
        (is_shareable r h_0 ==> forall_refs_heap is_shareable h_0 #t v))))
    (ensures (fun h_0 () h_1 -> write_post r v h_0 () h_1 /\ becomes_shareable !{} h_0 h_1))  
\end{lstlisting}
Similarly to reading from a reference, witnessing and recalling stable
predicates also retain the exact pre- and post-conditions as listed in
\autoref{fig:monotonic-state}, as they also keep the whole heap unchanged.

In programs and in particular in the examples in the rest of the paper,
programmers can use \ls{lr_alloc}, \ls{lr_read}, and \ls{lr_write} via the
usual ML-style syntax, as \ls{alloc}, \ls{!}, and \ls{:=} respectively.

In addition to the operations lifted from the \ls{MST} effect, the \ls{LR} effect comes with two important additional operations, 
which allow \ls{private} references to be relabeled as \ls{shareable} or \ls{encapsulated}.
\begin{lstlisting}
val label_shareable (#t:full_ground_typ) (r:ref (to_Type t)) : LR unit
    (requires (fun h_0 -> h_0 `contains` r /\ is_private r h_0 /\ forall_refs_heap is_shareable h_0 (sel h_0 r)))
    (ensures (fun h_0 () h_1 -> equal_dom h_0 h_1 /\ modifies !{label_map} h_0 h_1 /\ 
      is_private (label_map) h_1 /\ becomes_shareable !{sr} h_0 h_1))
val label_encapsulated (#a:Type) (#rel:preorder a) (r:mref a rel) : LR unit
    (requires (fun h_0 -> h_0 `contains` r /\ is_private r h_0 /\ ~(eq_addrs r label_map)))
    (ensures (fun h_0 () h_1 -> equal_dom h_0 h_1 /\ modifies !{label_map} h_0 h_1 /\ 
      is_private (label_map) h_1 /\ becomes_encapsulated !{r} h_0 h_1))
\end{lstlisting}
Both operations can be called with private references contained in the
heap, with \ls{label_shareable} requiring that the reference \ls{r} has to store
\ls{full_ground_typ}-values (this helps to prove that \ls{label_shareable}
preserves the global invariant, as noted in \autoref{sec:global-invariant}),
that \ls{r} has a trivial preorder (via the type \ls{ref} of non-monotonic
references, see \autoref{fig:monotonic-state}), and that any references \ls{r}
points to have already been labeled as \ls{shareable}. This last point
means that when labeling complex data structures as \ls{shareable} (\EG linked
lists), we have to first apply \ls{label_shareable} at the end of chains of
references. The \ls{label_encapsulated} operation on the other hand applies to
\ls{private} monotonic references with arbitrary preorders, which in turn
necessitates the explicit comparison with \ls{label_map}---something that
follows implicitly for \ls{label_shareable} due to the preorders of \ls{r} and
\ls{label_map} being different. 
These operations are defined by appropriately modifying the ghost state storing
the labeling. \da{This part needs a better explanation, that the label map
modifications happen behind the abstract interface and on the unerased view of
the ghost state.}

\newtext{
\paragraph{Usability.}
From our experience working on the examples in the paper, we report that
the library interface is usable and, compared to the Monotonic State interface,
the extra burden to the user is reasonable for the extra functionality.
The main two differences are that the user has to 
specify both the footprint and what gets shared for each call,
and to keep track of the labels of the references.
Some of the details of the library are explicit into the interface, in particular predicates
about the \ls{label_map} reference, which are not relevant to the user,
but the user does not have to deal with them because of \fstar{}'s SMT automation.
We did not find a way to hide these details without losing expressiveness.
On the other hand,
the use of a deep embedding to implement the full ground types---which was 
necessary to prove preservation of the invariant inside the stateful operations---is not great for SMT automation,
as sometimes the user has to help \fstar{} figure out what the type is.
}

\section{Higher-order contracts}\label{sec:hoc}

As explained in \autoref{sec:key-hoc}, \secrefstar{} uses higher-order contracts at the 
boundary between verified and unverified code to ensure properties about 
returned or passed shareable references that cannot be statically determined.
The higher-order contracts
are built around two functions, \ls{import} and 
\ls{export}, that map from \emph{intermediate} types to \emph{source} types, and 
vice versa (\autoref{fig:overview}).
We first introduce intermediate types in \autoref{sec:hoc-interfaces}, and then
explain in \autoref{sec:import-export} how the \ls{export} and \ls{import} functions are defined.

\subsection{The polymorphic and intermediate interfaces}
\label{sec:hoc-interfaces}
The verified program and the unverified code communicate through an interface, 
which is different in the source language compared to the target language.
To represent the interface of unverified code as presented in \autoref{sec:key-ucode},
we define in \autoref{fig:concrete-predicates} the type class \ls{poly_iface},
parameterized by three abstract predicates (\ls{threep}).\tw{TODO: Explain also the witnessable bit, not introduced yet.}
The types that can appear on the interface of unverified code are thus
determined by the instances we provide for the \ls{poly_iface} class.
\newtext{We provide instances for base types, sums, products, references, and {\em arrow types}.
The instance for arrows is defined so that
\ls{poly_iface} can be used
to describe higher-order interfaces (\EG \ls{poly_student_hw} from \autoref{sec:key-ucode}).}
Notice that the instances for references and linked lists take
a full ground type (\autoref{sec:global-invariant}).
All types that have a representation as a full ground type can be
characterized by the \ls{poly_iface} class.

\begin{figure}
\begin{lstlisting}[escapechar=\$]
type threep = $\inv$:(heap -> Type0) & $\prref$:(#a:_ -> #rel:_ -> mref a rel) & $\hrel$:(preorder heap)
class poly_iface (preds:threep) (t:Type) = { wt : every_mref_tc t }
instance poly_iface_int preds : poly_iface preds int
instance poly_iface_sum preds a b {|poly_iface preds a|} {|poly_iface preds b|}:poly_iface preds (either a b)
instance poly_iface_ref preds (t_emb:full_ground_typ) : poly_iface preds (ref (to_Type t_emb)) 
instance poly_iface_arrow $\inv$ $\prref$ ($\hrel$) a b {| c1:poly_iface ($\inv$, $\prref$, $\hrel$) a |} {| c2:poly_iface ($\inv$, $\prref$, $\hrel$) b |}
  : poly_iface preds (x:a -> MST b (requires (fun h_0 -> $\inv$ h0 /\ c1.wt.$\every$ x $\prref$))
                                 (ensures (fun h_0 r h_1 -> $\inv$ h1 /\ h0 $\hrel$ h1 /\ c2.wt.$\every$ $\prref$ r)))
\end{lstlisting}
  \caption{Showing the polymorphic interface implemented with type class \ls{poly_iface}. Selected instances.}
  \label{fig:concrete-predicates}
\end{figure}

Having defined polymorphic interfaces necessary to represent unverified code,
we define intermediate interfaces as
polymorphic interfaces instantiated with some concrete predicates, 
and intermediate types as instances of such interfaces.
Because we represent unverified code using polymorphism (\autoref{sec:key-ucode}), our higher-order contracts have to be
enforced on functions under polymorphism as well \cite{sciostar}, which means that our contracts
do not know what they enforce until the very end, when the type is instantiated
and the actual checks are provided.

\subsection{Exporting and importing}
\label{sec:import-export}

The functions \ls{export} and \ls{import} are implemented using two type
classes, named \ls{exportable_from} and \ls{importable_to}.
\newtext{To build an instance for \ls{exportable_from}, one has to provide a function
from a strong type (\ls{styp}) to an intermediate type (\ls{ityp}), and a proof that it preserves
the $\every$ combinator.
Such a proof is easy to provide because a reference is not affected by the \ls{export}
function since its type is restricted to be a full ground type---\IE exporting a reference is identity.}
\newremove{For instance, \ls{exportable_from styp} carries an intermediate type \ls{ityp}
together with a function \ls{styp -> ityp} that is shown to preserve the $\every$
combinator.}
Its dual, \ls{importable_to} is similar, but it has to
account for potential breaches of contract by allowing \ls{import} to fail 
with an error.
We also provide a \ls{safe_importable_to} class for types for which importing 
always succeeds.

To be able to export/import functions that are polymorphic in the three predicates (\ls{threep}),
the type classes are parameterized with some implementation of the three predicates.
The \ls{specs_of_styp} is just an inductive type that semi-syntactically represents
the pre- and post-condition that appear on type \ls{styp}, structured as a tree.
It allows us to be able to
provide the contracts at importing/exporting time, when the three predicates become concrete.
The functions \ls{export} and \ls{import} take the contracts as a tree of type \ls{hoc_tree specs_of_styp},
which has the same structure as \ls{specs_of_typ}, and its type guarantees that each pre- and post-condition that has to
be enforced has a contract.

\begin{lstlisting}
class exportable_from (preds:threep) (styp:Type) (specs_of_styp:spec_tree preds) = {
  c_styp : every_mref_tc styp; 
  ityp : Type; c_ityp : poly_iface preds ityp;
  export : styp -> hoc_tree specs_of_styp -> ityp;
  export_preserves_$\prref$ : x:styp -> $\prref$:_ -> hocs:_ -> Lemma ($\every$ $\prref$ x) ($\every$ $\prref$ (export x hocs)) }
class importable_to (preds:threep) (styp:Type) (specs_of_styp:spec_tree preds) = {
  c_styp : every_mref_tc styp;
  ityp : Type; c_ityp : poly_iface preds ityp;
  import : ityp -> hoc_tree specs_of_styp -> either styp err;
  import_preserves_$\prref$ : x:ityp -> $\prref$:_ -> hocs:_ -> Lemma ($\every$ $\prref$ x) ($\every$ $\prref$ (import x hocs)) }
\end{lstlisting}

For intermediate types (and thus already in the source language),
these type classes are trivially instantiated with identities, and for
operators such as \ls{option}, one proceeds recursively. For instance, the 
\ls{export} function for \ls{option a} when {a} is itself exportable will 
\emph{map} the \ls{export} function for \ls{a}.
What is more interesting is importing and exporting functions that may have
pre- and post-conditions.

\paragraph{Exporting functions.} 
We provide a general instance to export a function \ls{f} that has source type \ls{a -> MST (either b err) pre post}
to a function of the intermediate type \ls{a' -> MST (either b' err) pre' post'},
where \ls{a'} and \ls{b'} are the intermediate counterparts of \ls{a} and \ls{b},
and \ls{pre'} and \ls{post'} are the pre- and post-condition that appear in
\ls{poly_iface_arrow} in \autoref{fig:concrete-predicates}.
For this, we need an \ls{import} function (\ls{a' -> a}) for type \ls{a} and an
\ls{export} function (\ls{b -> b'}) for type \ls{b}, which we then compose with \ls{f}.
To be able to call \ls{f}, we have to ensure that \ls{pre} holds,
which we cannot do in general without a dynamic check, thus, the instance
requires the existence of a
\ls{check} function which approximates the pre-condition:
\begin{lstlisting}
check : x:a -> MST (either unit err) (requires (pre' x)) (ensures (fun h_0 r h_1 -> h_0 == h_1 /\ (Inl? r ==> pre x h_0)))
\end{lstlisting}
which either returns an error or succeeds by enforcing \ls{pre}. 
Notice that \ls{check} may safely assume \ls{pre'} as it  
holds when it is called.
Finally, one has to ensure that \ls{post'} holds, for which
we require that the original post-condition of \ls{f} satisfies the following property:
\ls{forall x h_0 r h_1. post x h_0 r h_1 ==> post' h_0 r h_1}.
One detail that we omitted is that \ls{pre'} and \ls{post'} are
polymorphic in the type of \ls{x} and \ls{r},
compared to \ls{pre} and \ls{post} that use the types \ls{a} and \ls{b}.
Therefore, because we call \ls{import}
and \ls{export}, we also have to prove that
\ls{pre' x ==> pre' (import x)} and \ls{post' h_0 r h_1 ==> post' h_0 (export r) h_1},
which we do by using the two lemmas included in the type classes, \ls{export_preserves_}$\prref$
and \ls{import_preserves_}$\prref$.

\paragraph{Importing functions.}
Similarly, we can import a general function \ls{f} of the intermediate type \ls{x:a' -> MST (either b' err) pre' post'}
to the source type \ls{x:a -> MST (either b err) pre post}.
Because the type already accounts for errors, we use the 
\ls{safe_importable_from} class to avoid duplicating errors.
This time we export the input from \ls{a} to \ls{a'}, call the function, and import the output from 
\ls{b'} to \ls{b}. To be able to call the function \ls{f}, we need to enforce
\ls{pre'} using \ls{pre}, hence we only provide an instance under the assumption
\ls{forall x h_0. pre x h_0$\ $ ==> pre' x h_0}. Similarly, once the function is run,
we only know that \ls{post'} is verified, so we also require
\ls{forall x h_0 e h_1. pre x h_0 /\\ post' h_0$\ $ (Inr e) h_1$\ $ ==> post x h_0 (Inr e) h_1}, 
which ensures we do preserve the post-condition of the error case. Often this 
will be discharged automatically by having the post-condition trivially true for
errors.
The more interesting part is when the function succeeds, and we need to verify
the post-condition. For this, the instance also requires a 
stateful contract \cite{FindlerF02}, see \ls{select_check} \cite{sciostar} below:
\begin{lstlisting}
type cb_check a b {| every_mref_tc b |} pre post x eh =
  r:b -> MST (either unit err) (fun h_1 -> post' eh r h_1) (fun h_1 rck h_1' -> h_1 == h_1' /\ (Inl? rck ==> post x eh r h_1))
select_check : x:a -> MST (eh:erased heap {pre x eh} & cb_check a (either b err) pre post x eh) 
    (requires (pre x)) (ensures (fun h0 r h1 -> reveal r._1 == h_0 /\ h_0 == h_1))
\end{lstlisting}
A stateful contract is first run before calling \ls{f} to read
from the heap information necessary to enforce the post-condition, which it can store in the closure,
and then the callback of the contract is run after \ls{f} returns to verify
the post-condition \ls{post}, or return an error 
otherwise. The \ls{same_values} predicate from the autograder example (\autoref{fig:autograder_example})
is enforced by \secrefstar{} using a stateful contract.

\begin{figure}
  \begin{mdframed}[backgroundcolor=black!5,hidealllines=true]
  \begin{lstlisting}[xleftmargin=-6pt,framexleftmargin=0pt]
  let $\cinv$ = lr_inv (* from $\text{\autoref{sec:lr-effect}}$. The three concrete predicates, defined as in $\autoref{sec:key-ucode}$.*)
  let $\cprref$ = fun r -> witnessed (is_contained r) /\ witnessed (is_shareable r)
  let $\chrel$ = fun h_0 h_1 -> modif_shareable_and_encaps h_0 h_1 /\ same_labels h_0 h_1
  
  type interface^S = {
    ctype : threep -> Type;
    specs_of_ctype : threep -> spec_tree threep;
    hocs : hoc_tree (specs_of_ctype ($\inv_c$, $\prref_c$, $\hrel_c$));
    imp_ctype : preds:threep -> safe_importable_to preds (ctype preds) (specs_of_ctype preds);
    $\psi$ : heap -> int -> heap -> Type0 }

  type ctx$^S$ I$^S$ = I$\overset{S}{.}$ctype ($\inv_c$, $\prref_c$, $\hrel_c$)
  type prog$^S$ I$^S$ = I$\overset{S}{.}$ctype ($\inv_c$, $\prref_c$, $\hrel_c$) -> LR int $\top$ I$\overset{S}{.}\psi$
  type whole$^S$ = $\psi$ : post_cond & (unit -> LR int $\top$ $\psi$)
  let link$^S$ #I$^S$ (P:prog$^S$ I$^S$) (C:ctx$^S$ I$^S$) = (| I$\overset{S}{.}\psi$, fun () -> P C |) (** denoted by $C[P]$ **)

  type interface^T  =  {
    ctype : threep -> Type;
    interm_ctype : preds:threep -> poly_iface preds (ctype preds); }
    
  type ctx$^T$ I$^T$ = $\inv$:_ -> $\prref$:_ -> $\hrel$:_ -> 
    poly_alloc_t $\threep$ -> poly_read_t $\threep$ -> poly_write_t $\threep$ -> I$\overset{T}{.}$ctype ($\inv$, $\prref$, $\hrel$)

  type prog$^T$ I$^T$ = I$\overset{T}{.}$ctype ($\inv_c$, $\prref_c$, $\hrel_c$) -> LR int $\top$ $\top$
  type whole$^T$ = unit -> LR int $\top$ $\top$
  let link$^T$ #I$^T$ (P:prog$^T$ I$^T$) (C:ctx$^T$ I$^T$) =
    fun () -> P (C $\cthreep$ ctx_alloc ctx_read ctx_write) (** denoted by $C[P]$ **)  

  let compile_interface (I$^S$:interface$^S$) : interface$^T$ = { (** denoted by $\cmp{I^S}$ **)
    ctype = (fun preds -> (I$\overset{S}{.}$imp_ctype preds).ityp);
    interm_ctype = (fun preds -> (I$\overset{S}{.}$imp_ctype preds).c_ityp); }
  let compile_prog #I$^S$ (P:prog$^S$ I$^S$) : prog$^T$ (I$^S\cmparrow{}$) = (** denoted by $\cmp{P}$ **)
    fun C -> P ((I$^S$.imp_ctype ($\inv_c$, $\prref_c$, $\hrel_c$)).safe_import C I$^S$.hocs)
  let back_translate_ctx #I$^S$ (C:ctx$^T$ I$^S\cmparrow{}$) : ctx$^S$ I$^S$ = (** denoted by $C^T\bakarrow{}$ **)

    (I$^S$.imp_ctype ($\inv_c$, $\prref_c$, $\hrel_c$)).safe_import (C ($\inv_c$, $\prref_c$, $\hrel_c$)) I$^S$.hocs
  \end{lstlisting}
  \end{mdframed}
  \caption{Secure compilation framework. Idealized mathematical notation.
    Type \ls{threep} is presented in \autoref{fig:concrete-predicates}.
    Types \ls{poly_alloc_t}, \ls{poly_read_t}, and \ls{poly_write_t} are presented in \autoref{fig:type-poperations}.
  \ca{too similar to popl version}}
  \label{fig:model_sec_comp}
\end{figure}

\section{\titsecrefstar{}: Formally Secure Compilation Framework}
\label{sec:secure-compilation}

In this section we present the formalization of \secrefstar{} (\autoref{fig:model_sec_comp}).
The formalization is mechanized in \fstar{} 
and instantiates the compiler model presented by \citet{AbateBGHPT19}
with source and target languages that are shallowly embedded in the style
of \citet{sciostar}.
We start by presenting the source (\autoref{sec:sc-source}) and target language (\autoref{sec:sc-target}).
We continue by giving a semantics to them (\autoref{sec:sc-semantic}).
Then, we prove soundness and RrHP by carefully defining compilation and back-translation (\autoref{sec:sc-soudness-rrhp}).
For all of these, we first present the setting when the verified program has initial control, however, \secrefstar{} is verified to be
secure also when the unverified code has initial control (also \autoref{sec:sc-soudness-rrhp}).

\subsection{Source Language}\label{sec:sc-source}

We make a clear split between the verified program and
the code it is linked with.
In the source language, we refer to the former as a partial source program, and the
latter as a source context. The latter is also a verified piece of code,
but for secure compilation we think of it as a target context (aka unverified) that was back-translated (the dual of compilation)
to a source context (aka verified).

The partial source program and context communicate through a higher-order interface
of type \ls{interface}$^S$, which contains a type (\ls{ctype}), a type class constraint on this type (\ls{imp_ctype}), and a post-condition $\psi$.
As we want the type \ls{ctype} to be polymorphic in the three predicates in our source language, so as 
to obtain a polymorphic type in the target language, we include
the fields \ls{specs_of_ctype} and \ls{hocs} to be able to apply higher-order contracts
to a polymorphic type as explained in \autoref{sec:import-export}.
The representation of the partial source programs (\ls{prog}$^S$) is a computation in the monadic effect \ls{LR} that
takes as argument a value of type \ls{ctype} instantiated with the concrete predicates,
returns an integer, and ensures the post-condition $\psi$.
The representation of source contexts (\ls{ctx}$^S$) is just a value of type \ls{ctype}
instantiated with the concrete predicates.\ca{what if it is a ref?}
Source linking (\ls{link}$^S$) is then defined just as function application.
A whole source program is a dependent pair between a post-condition and a computation in effect \ls{LR}
from unit to integer that guarantees the post-condition.

The type class constraint \ls{imp_ctype} guarantees that the specification on the interface
can be soundly enforced by \secrefstar{}. The assumptions made about the unverified code
appear on this source interface (\EG the assumptions the autograder makes about
the homework in \autoref{fig:autograder_example}),
and \ls{imp_ctype} ensures that they are one of the two types of
assumptions that \secrefstar{} supports (\autoref{sec:key-verification}).
\ca{maybe more details are needed on why ``importable'' does that}

\subsection{Target language}\label{sec:sc-target}

The formalization of the target language is similar to the source language,
with two notable differences:
the representation of partial target programs (\ls{prog}$^T$) does not require the 
computation to be verified with respect to a post-condition, since it has the
trivial post-condition;
and 
target contexts (\ls{ctx}$^T$) use the
representation 
from \autoref{sec:key-ucode}.
The type \ls{ctype} in the target interface (of type \ls{interface}$^T$),
now is constrained with the \ls{poly_iface} type class from \autoref{sec:hoc-interfaces}.


%
A partial target program (see \ls{prog}$^T$) expects the context to satisfy
the concrete specification it assumes about the context: to only modify shareable and encapsulated references.
Therefore, linking is done at the types of the target language, but after linking,
the communication between the program and 
the context happens at an intermediate type that contains specifications~\cite{sciostar}.
This can be seen in the definition of target linking (\ls{link}$^T$).
It instantiates the target context with the concrete predicates $\cinv$, $\cprref$, and
$\chrel$, thus strengthening the type of the context.
%

During linking, the concrete implementations of the operations \ls{alloc}, \ls{read}, and \ls{write} are provided.
As explained in \autoref{sec:key-ucode}, their definitions are straightforward (specifications and proofs elided):
\begin{lstlisting}
let ctx_alloc init = let r = lr_alloc init in label_shareable r; r / let ctx_read = lr_read / let ctx_write = lr_write
\end{lstlisting}

The representation of target contexts is a reasonable representation for unverified code---logical reasoning
is not necessary to prove the pre- and post-conditions that use the abstract predicates.
As noted in \autoref{sec:key-ucode}, we deeply embed a total
simply typed language with first-order ML-style references, and provide a translation
from any typed expression in this language to a context of type \ls{ctype}
to the representation of target contexts \ls{ctx}$^T$.
The translation is defined structurally recursively on the typing derivations.
Formalizing this translation revealed multiple edge cases in higher-order settings,
and also the fact that monotonic references cannot be shared with unverified code.
This gives us sufficient confidence that our representation is reasonable,
in addition to multiple examples we implemented using this representation in the artifact (\EG the homework in \autoref{sec:key-ucode}).

\ca{i think the point made in the previous paragraph, makes this paragraph rather not interesting.
disabling for now}\ca{
Another fact is that a target context that uses the abstract predicates is forced to use
the operations it gets as argument, because it is prevented to use the default operations
of the monadic effect \ls{LR}.
The prevention is done because the default operations of the monadic effect \ls{LR}
have concrete specifications that are not implied (or do not imply) the abstract predicates.
With the default operations, to allocate a reference or to write to a reference, one has to prove
that the modification done to the heap implies the abstract predicates $\inv$ and $\hrel$, which is not possible.
Not even reading from a reference is possible,
because to read, one has to prove that the reference is contained in the heap,
and since all the references the context has access to are refined with an abstract $\prref$,
we cannot do.
Therefore, the target context is forced to use the operations it gets as argument,
which guarantees that the unverified context works only with shareable references because
we know the operations we pass to it do that.}

\subsection{Semantics}\label{sec:sc-semantic}

Our source and target languages are shallowly embedded. Since whole source and
target programs are computations in the \ls{LR} effect, we can define the
same semantics for both.
We define this semantics as predicates over the initial heap, the final 
result (an integer), and the final heap. 
\begin{lstlisting}
type state_sem = heap -> state -> heap -> Type0
let ($\subseteq$) (s_1 s_2:state_sem) = forall h_0 r h_1. s_1 h_0 r h_1 ==> s_2 h_0 r h_1
\end{lstlisting}
%
To 
define the semantics of whole programs, we must 
first reveal the monadic representation of 
\ls{LR}
using the \ls{reify} construct provided by 
\fstar{} \cite{indexedeffects}.
%
We define the semantics function \ls{beh} by 
adapting the monad morphism $\theta$ from \autoref{sec:mst-representation} \cite{dm4all,sciostar}.
\begin{lstlisting}
let beh_mst (m:mst_repr int) : state_sem = fun h_0 r h_1 -> forall p. $\theta$ m p h_0 ==> p r h_1
let beh (W:unit -> LR int $\top$ $\top$) = beh_mst (reify (W ()))
\end{lstlisting}
Since we use $\theta$ for both the Dijkstra monad and our semantics function,
we can easily turn the intrinsic property that a whole source program
satisfies a post-condition $\psi$ into an extrinsic property:

\begin{theorem}[Soundness of whole source programs]
  \label{thm:sem-source-whole-programs}
  $\forall (\psi, W):\mathsf{whole}^S.\ \mathsf{beh}(W) \subseteq \psi$
\end{theorem}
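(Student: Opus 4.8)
The plan is to unfold the definitions of $\mathsf{beh}$, $\psi$-satisfaction, and the Dijkstra-monad refinement, and observe that the theorem is essentially a restatement of the intrinsic typing guarantee of $\mathsf{whole}^S$. Recall that a whole source program $(\psi, W) : \mathsf{whole}^S$ packages a post-condition $\psi$ together with a computation $W : \mathsf{unit} \to \mathsf{LR}\ \mathsf{int}\ \top\ \psi$. Since $\mathsf{LR}$ is an effect synonym on top of $\mathsf{MSTATE}$, unfolding shows that $W\,()$ has a monadic representation $m = \mathsf{reify}(W\,()) : \mathsf{mst\_repr}\ \mathsf{int}$ refined by the property $\theta\,m \sqsubseteq wp$, where $wp$ is the weakest-precondition encoding of the $\mathsf{LR}$ spec with trivial precondition $\top$ and post-condition $\psi$ (incorporating the invariant $\mathtt{lr\_inv}$ and the heap preorder $\preceq$). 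Concretely, $wp\ p\ h_0$ reduces to $\forall v\ h_1.\ h_0 \preceq h_1 \wedge \psi\ h_0\ v\ h_1 \Rightarrow p\ v\ h_1$ (modulo the invariant conjuncts).

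\textbf{The key steps, in order.} First I would rewrite $\mathsf{beh}(W)$ as $\mathsf{beh\_mst}(m)$ with $m = \mathsf{reify}(W\,())$, and then fix arbitrary $h_0, r, h_1$ with $\mathsf{beh\_mst}(m)\ h_0\ r\ h_1$; by definition this means $\forall p.\ \theta\,m\,p\,h_0 \Rightarrow p\,r\,h_1$. Second, I would instantiate $p$ with a carefully chosen post-condition — the natural choice is $p := \lambda v\,h.\ \psi\ h_0\ v\ h$ (note $h_0$ is now a fixed parameter, so this is a legitimate $\mathsf{state} \to \mathsf{heap} \to \mathsf{Type0}$ predicate). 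Third, I must discharge the hypothesis $\theta\,m\,p\,h_0$: here I use the refinement $\theta\,m \sqsubseteq wp$, which by the definition of $\sqsubseteq$ on $\mathsf{mst\_w}$ gives $wp\,p\,h_0 \Rightarrow \theta\,m\,p\,h_0$; and $wp\,p\,h_0$ unfolds to $\forall v\,h_1.\ h_0 \preceq h_1 \wedge \psi\,h_0\,v\,h_1 \Rightarrow \psi\,h_0\,v\,h_1$, which is a tautology. Hence $\theta\,m\,p\,h_0$ holds, and feeding it to the $\mathsf{beh\_mst}$ hypothesis yields $p\,r\,h_1 = \psi\,h_0\,r\,h_1$, which is exactly what $\mathsf{beh}(W) \subseteq \psi$ demands at $(h_0, r, h_1)$. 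In \titfstar\ this whole argument should go through essentially by SMT once the definitions of $\mathsf{beh}$, $\sqsubseteq$, $\mathtt{LR}$, and $wp$ are unfolded, so the proof term may well be just \lstinline{()} or a one-line \lstinline{assert}/\lstinline{reveal} to trigger the right unfoldings of $\mathsf{reify}$.

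\textbf{The main obstacle} I anticipate is bookkeeping around the invariant $\mathtt{lr\_inv}$ and the heap preorder $\preceq$ that the $\mathsf{LR}$ effect bakes into the $wp$: the post-condition visible in the type of $W$ is $\psi$, but the $wp$ actually guarantees $\mathtt{lr\_inv}\,h_1 \wedge h_0 \preceq h_1 \wedge \psi\,h_0\,r\,h_1$, so one must be careful that the extrinsic statement only claims $\psi$ (which it does) and that monotonicity of $wp$ (the refinement in $\mathsf{mst\_w}$) is available to weaken the stronger internal post-condition down to $p$. A secondary subtlety is making sure $\mathsf{reify}$ actually exposes $m$ with the refinement attached — this relies on $\mathsf{MSTATE}$ being marked $\mathtt{reifiable}$, as stated in \autoref{sec:mst-representation} — and that $\theta$ as used in $\mathsf{beh\_mst}$ is literally the same monad morphism used to define $\mathsf{mst\_dm}$, so that the refinement $\theta\,m \sqsubseteq wp$ is directly usable without any reindexing. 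Once these are lined up, there is no real mathematical content left; the theorem is, as the surrounding text says, just the observation that "we use $\theta$ for both the Dijkstra monad and our semantics function," so the intrinsic guarantee transports to an extrinsic one for free.
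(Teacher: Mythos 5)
Your proposal is correct and is essentially the paper's own argument: the paper's one-line proof sketch appeals to the refinement $\theta\,m \sqsubseteq wp$ carried by the reified representation and says this "implies" $\mathsf{beh}(m) \subseteq \psi$, and your instantiation of $p := \lambda v\,h.\ \psi\,h_0\,v\,h$ together with the tautological $wp\,p\,h_0$ is exactly the spelled-out content of that implication. The only difference is level of detail — you also flag the $\mathtt{lr\_inv}$/$\preceq$ bookkeeping that the paper's idealized $wp$ silently drops — so no further comparison is needed.
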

\begin{proof}[Proof sketch.] For a whole source program,
  we know its underneath monadic representation is \ls{m:(mst_repr int)\{theta m} $\sqsubseteq$ \ls{(fun p h_0 -> forall r h_1.}$\psi\ $\ls{h_0 r h_1 ==> p r h_1)\}} (\autoref{sec:mst-representation}),
  which implies that \ls{beh m} $\subseteq\psi$. \end{proof}

The proof is so straightforward because the type of a whole source program 
contains an intrinsic specification. We just lift the specification to 
an extrinsic property of the whole program.
Intuitively, the proof that the program satisfies the specification
was already done modularly by \fstar{} typing.

\newtext{\paragraph{Trusted Computing Base (TCB).}}
The semantics function is written in the same language as the code it reasons about, 
meaning that the semantics a program is given 
depends on \fstar{}'s consistency.
Therefore, as part of our TCB we include the 
following \fstar{} features that \secrefstar{} relies on:
\fstar{}'s type system,
the effect mechanism (which has a paper proof), the axiomatized \ls{Ghost} effect, 
and \newtext{the type abstractions provided by the module system (\EG the implementation of references is hidden behind a module interface)}.
\newremove{We also assume that \ls{assume} is not used by the program and context.}
\ca{what happens if any of these is unsound? one can
  inhabit verified programs/contexts with unverified code.}

\subsection{Soundness and Robust Relational Hyperproperty Preservation (RrHP)}\label{sec:sc-soudness-rrhp}
\newtext{
\citet{sciostar}, in their setting for IO, show that soundness and RrHP
can be proved using a Syntactic Inversion Law,
informally, \textit{Compilation+Target Linking=Back-translation+Source Linking}.
Such a strong result is achievable because we are working with
shallow embeddings, 
we take advantage of polymorphism to model the context,
and we implement higher-order contracts using two dual functions.
We managed to adapt their compilation model in our setting for state,
and to prove the law and the two criteria by following their proofs.}
\ca{
The novel additions include dealing with references that cross the
verified-unverified boundary by taking advantage of the monotonicity of \ls{LR},
like $\prref$ in the representation for unverified code and how $\cprref$
uses \ls{witnessed}.}

We define the compilation of partial programs (\ls{compile_prog} in \autoref{fig:model_sec_comp}, denoted as $\cmp{P}$) as
wrapping the partial source program in a new function of type \ls{prog}$^T$.
The partial target program takes as argument an instantiated target context whose
specification states that it modifies only shareable references.
Therefore, to enforce the other assumptions made by the partial source program, \secrefstar{} adds to the compiled program the
higher-order contracts using \ls{safe_import}.

We define the back-translation of target contexts (denoted by $\bak{C}$) 
as instantiating the context and then adding the higher-order contracts to produce a source context.
One immediately observes that this is exactly what happens during compilation plus target linking,
\newtext{and since source linking is just function application, it means syntactic inversion law can be proved
just by unfolding definitions.}

\begin{theorem}[Syntactic inversion law]
  \label{thm:syntactic-equality}
  $\forall I^S.\ \forall C^T:\mathsf{ctx}^T\ \cmp{I^S}.\ \forall P:\mathsf{prog}^S\ I^S.\ C^T[\cmp{P}]	= \bak{C^T}[P]$
\end{theorem}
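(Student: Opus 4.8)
The plan is to prove the equality purely definitionally. Once the four combinators \ls{link}$^{T}$, \ls{compile_prog}, \ls{link}$^{S}$, and \ls{back_translate_ctx} of \autoref{fig:model_sec_comp} are unfolded and the intervening $\beta$-redexes contracted, both sides of $C^T[\cmp{P}] = \bak{C^T}[P]$ reduce to literally the same term, so the statement is reflexivity. In \fstar{} I would expect to discharge it by a one-line tactic that normalises these definitions (delta, beta, iota) and then calls \ls{trefl}, or by a trivial proof if \fstar's conversion checker already unfolds enough on its own.

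First I would carry out the reductions on each side. On the left, unfolding \ls{link}$^{T}$ turns $C^T[\cmp{P}]$ into \ls{fun () -> compile_prog P (C ctx_alloc ctx_read ctx_write)}, with \ls{C} first applied to the three concrete predicates $\cinv$, $\cprref$, $\chrel$; unfolding \ls{compile_prog} rewrites $\cmp{P}$ into \ls{fun C -> P ((I.imp_ctype (...)).safe_import C I.hocs)}, and the $\beta$-step leaves \ls{fun () -> P ((I.imp_ctype (...)).safe_import (C ctx_alloc ctx_read ctx_write) I.hocs)}. On the right, unfolding \ls{back_translate_ctx} gives $\bak{C^T}$ as \ls{(I.imp_ctype (...)).safe_import (C ctx_alloc ctx_read ctx_write) I.hocs} (the displayed definition abbreviates the application of \ls{C} to the concrete operations), and unfolding \ls{link}$^{S}$ produces the dependent pair whose first component is $I^S.\psi$ and whose second is \ls{fun () -> P (back_translate_ctx C)}; substituting, the second component is \ls{fun () -> P ((I.imp_ctype (...)).safe_import (C ctx_alloc ctx_read ctx_write) I.hocs)}, which is term-for-term the left-hand side. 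The two sides coincide once one passes from a \ls{whole}$^{S}$ to a \ls{whole}$^{T}$ by dropping the post-condition component, which is sound because \ls{LR int} $\top\ \psi$ refines \ls{LR int} $\top\ \top$.

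The conceptual reason this works --- worth recording even though the proof itself is a computation --- is that compilation places the \ls{safe_import} wrapper \emph{inside} $\cmp{P}$, where it is applied at target-link time to whichever context $\cmp{P}$ is linked with, whereas back-translation places the identical \ls{safe_import} wrapper \emph{onto} the context, to be consumed at source-link time; in both cases \ls{safe_import} receives exactly the same fully instantiated target context (applied to $\cinv$, $\cprref$, $\chrel$ and then to \ls{ctx_alloc}, \ls{ctx_read}, \ls{ctx_write}) and its result is handed to the same $P$. The two linking operators differ only in that \ls{link}$^{S}$ additionally records $\psi$ in a dependent pair, which the coercion erases. The hard part, such as it is, will not be mathematical but bookkeeping in \fstar{}: (i) getting the instance and implicit-argument resolution to line up so that the \ls{imp_ctype} projection reached through \ls{compile_interface} --- which defines its \ls{ctype} and \ls{interm_ctype} fields directly from \ls{(I.imp_ctype ...).ityp} and \ls{.c_ityp} --- is recognised as the very same projection used by \ls{back_translate_ctx}; and (ii) making the \ls{whole}$^{S}$-to-\ls{whole}$^{T}$ coercion explicit, since as written the equality relates values that inhabit the two different \ls{whole} types. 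Both are routine.
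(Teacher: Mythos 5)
Your proposal is correct and matches the paper's argument exactly: the paper's entire proof is ``Follows by unfolding definitions,'' and your step-by-step unfolding of \ls{link}$^{T}$, \ls{compile_prog}, \ls{back_translate_ctx}, and \ls{link}$^{S}$ is just that computation spelled out, with the same observation that both sides hand the identical \ls{safe_import}-wrapped, fully instantiated context to $P$. Your two bookkeeping remarks (the figure's abbreviated application of $C$ to the concrete operations, and the \ls{whole}$^{S}$-to-\ls{whole}$^{T}$ coercion dropping $\psi$) are reasonable readings of the paper's idealized notation and do not change the approach.
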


\secrefstar{} allows one to statically verify that a partial program satisfies
a post-condition $\psi$. We prove a soundness theorem that guarantees
that compiling the partial program and then linking it with arbitrary unverified code
produces a whole target program that also satisfies the post-condition $\psi$
(when the partial program has initial control).

\begin{theorem}[Soundness]
  \label{thm:soundness}
  $\forall I^S.\ \forall P:\mathsf{prog}^S\ I^S.\ \forall C^T:\mathsf{ctx}^T\ \cmp{I^S}.\ \mathsf{beh}(C^T[\cmp{P}])\subseteq I\overset{S}{.}\psi$
\end{theorem}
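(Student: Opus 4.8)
The plan is to obtain \autoref{thm:soundness} as a near-immediate corollary of the syntactic inversion law (\autoref{thm:syntactic-equality}) together with the soundness of whole source programs (\autoref{thm:sem-source-whole-programs}). Fix a source interface $I^S$, a partial source program $P:\mathsf{prog}^S\ I^S$, and an arbitrary target context $C^T:\mathsf{ctx}^T\ \cmp{I^S}$. By \autoref{thm:syntactic-equality} we have $C^T[\cmp{P}] = \bak{C^T}[P]$, so it suffices to prove $\mathsf{beh}(\bak{C^T}[P]) \subseteq I\overset{S}{.}\psi$. First I would check that this rewrite is legitimate under $\mathsf{beh}$: since $\mathsf{beh}$ is defined via \ls{reify} and the monad morphism $\theta$, it inspects only the underlying \ls{LR} computation (a thunked integer computation, weakened to trivial pre- and post-conditions), and \autoref{thm:syntactic-equality} pins down exactly that computation on both sides; hence $\mathsf{beh}$ applied to either side yields the same semantics.

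Next I would observe that $\bak{C^T}[P]$ is, by construction, a whole source program whose post-condition component is precisely $I\overset{S}{.}\psi$. Indeed, \ls{back_translate_ctx} has return type $\mathsf{ctx}^S\ I^S$: it is the result of applying \ls{safe_import} — the importing method of the \ls{safe_importable_to} instance that $I^S$ carries via \ls{imp_ctype}, supplied with the concrete predicates $\cinv$, $\cprref$, $\chrel$ — to $C^T$ instantiated at those same predicates, so $\bak{C^T}:\mathsf{ctx}^S\ I^S$. Then $\mathsf{link}^S\ P\ \bak{C^T}$, written $\bak{C^T}[P]$, is by definition the dependent pair of $I\overset{S}{.}\psi$ with the thunk that applies $P$ to $\bak{C^T}$, hence an element of $\mathsf{whole}^S$ whose first component is $I\overset{S}{.}\psi$. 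Applying \autoref{thm:sem-source-whole-programs} to this whole program gives $\mathsf{beh}(\bak{C^T}[P]) \subseteq I\overset{S}{.}\psi$, and composing with the equality above gives $\mathsf{beh}(C^T[\cmp{P}]) \subseteq I\overset{S}{.}\psi$, as required. In \fstar{} this is a short lemma whose body unfolds these definitions and invokes the two theorems.

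The \emph{substance} of the argument lies not in this composition, which is essentially definitional, but in the two facts that make \ls{back_translate_ctx} typecheck in the first place, both already established by this point in the development. First, instantiating the abstract predicates $\inv$, $\prref$, $\hrel$ of $C^T$ with the concrete ones $\cinv$, $\cprref$, $\chrel$ yields a context meeting the intermediate specification that it modifies only shareable and encapsulated references — the universal property of unverified \ls{MST} code from \autoref{sec:key-ucode}. Second, \ls{safe_import} then turns such a context into one satisfying the possibly refinement-rich assumptions recorded in $I^S$ without ever failing at the top level, which is the correctness of the higher-order contracts of \autoref{sec:hoc} (their $\prref$-preservation lemmas and the stateful-contract machinery). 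I therefore expect the only real obstacle within this theorem itself to be bookkeeping: ensuring that the coercion between $\mathsf{whole}^T$ and $\mathsf{whole}^S$ that is implicit in \autoref{thm:syntactic-equality} is transparent to $\mathsf{beh}$, so that the rewrite in the first paragraph holds definitionally rather than requiring extra reasoning.
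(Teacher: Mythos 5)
Your proposal is correct and follows exactly the paper's own argument: rewrite the goal using the syntactic inversion law (\autoref{thm:syntactic-equality}) so that $C^T[\cmp{P}]$ becomes $\bak{C^T}[P]$, observe that this is a whole source program with post-condition $I\overset{S}{.}\psi$, and conclude with \autoref{thm:sem-source-whole-programs}. The additional remarks about why \ls{back_translate_ctx} typechecks and why the rewrite is transparent to $\mathsf{beh}$ are sensible elaborations but do not change the route taken.
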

\begin{proof}[Proof sketch.]
  We rewrite the goal with \autoref{thm:syntactic-equality} to $C^T[\cmp{P}] = \bak{C^T}[P]$.
  Now, $\bak{C^T}[P]$ builds a whole source program and we can apply \autoref{thm:sem-source-whole-programs}.
\end{proof}

We prove that \secrefstar{} robustly preserves relational hyperproperties.
\citet{AbateBGHPT19} have this as their strongest secure compilation criterion, and
they show it is stronger than full abstraction.

\begin{theorem}[Robust Relational Hyperproperty Preservation (RrHP)]
  \label{thm:rrhp}
  $$\forall I^S.\ \forall C^T:\mathsf{ctx}^T\ \cmp{I^S}.\ \exists C^S:\mathsf{ctx}^S\ I^S.\ \forall P:\mathsf{prog}^S\ I^S.\ \mathsf{beh}(C^T[\cmp{P}])	= \mathsf{beh}(C^S[P])$$
  \end{theorem}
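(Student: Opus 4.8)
The plan is to take the back-translation $\bak{C^T}$ of the given target context as the witness for the existentially quantified source context $C^S$. By construction (the \ls{back_translate_ctx} function in \autoref{fig:model_sec_comp}), $\bak{C^T}$ is a well-typed element of $\mathsf{ctx}^S\ I^S$: it instantiates the polymorphic target context with the concrete predicates $\cinv$, $\cprref$, $\chrel$ and wraps the result in the higher-order contracts stored in the source interface via \ls{safe_import}. The key point that makes the quantifier order of RrHP work is that this construction never mentions the partial program $P$, so $C^S$ can legitimately be chosen before $P$ is introduced.

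With $C^S := \bak{C^T}$ fixed, I would then fix an arbitrary $P : \mathsf{prog}^S\ I^S$ and invoke the Syntactic inversion law (\autoref{thm:syntactic-equality}), which gives the \emph{syntactic} equality of whole programs $C^T[\cmp{P}] = \bak{C^T}[P] = C^S[P]$. Since this is a definitional equality of \fstar{} terms, applying the semantics function \ls{beh} to both sides immediately yields $\mathsf{beh}(C^T[\cmp{P}]) = \mathsf{beh}(C^S[P])$, which is exactly the required equation. No reasoning about the free-monad representation of \ls{LR}, about $\theta$, or about the dynamic checks inserted by the contracts is needed at this point --- all of that is discharged once and for all inside the proof of \autoref{thm:syntactic-equality}. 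This mirrors the structure \citet{sciostar} use for the IO setting.

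The real work, and hence the main obstacle, lies not in this proof but in \autoref{thm:syntactic-equality} itself, i.e. in showing that ``compile then link at the target'' produces literally the same whole program as ``back-translate then link at the source''. This needs the compilation model arranged so that (i) the \ls{safe_import}/contract machinery applied during \ls{compile_prog} is syntactically the same as the one applied inside \ls{back_translate_ctx}, and (ii) instantiating the polymorphic context with $\cinv$, $\cprref$, $\chrel$ during target linking coincides with the instantiation performed during back-translation, including the three operations \ls{ctx_alloc}, \ls{ctx_read}, \ls{ctx_write}. Following \citet{sciostar} this should reduce to unfolding definitions, but it is delicate precisely because the source and target interfaces, the polymorphism over the three abstract predicates (type \ls{threep}), and the higher-order contracts must all line up exactly.

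Finally, for the dual setting in which the unverified code has initial control, I would run the same three-step argument with the roles of program and context swapped and with \ls{export} in place of \ls{import}: choose the back-translated context as the witness, appeal to the corresponding inversion law, and conclude by applying \ls{beh} to a syntactic equality. The skeleton is identical; only the direction of the contracts changes.
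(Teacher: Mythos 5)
Your proposal is correct and matches the paper's own proof: the paper likewise instantiates the existential with the back-translated context $\bak{C^T}$ and concludes by \autoref{thm:syntactic-equality}, applying $\mathsf{beh}$ to the resulting syntactic equality. Your additional observations about quantifier order and about the real work residing in the inversion law are accurate but not needed beyond that.
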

\begin{proof}[Proof sketch.]
  The proof follows by back-translating the target context ($\bak{C^T}$) and using it to instantiate the existential of a source context, after which we 
  conclude with \autoref{thm:syntactic-equality}.
\end{proof}

Finally, we have proved that \secrefstar{}
also satisfies RrHP (\autoref{thm:rrhp}) in two other settings as~\citet{sciostar}.
\textbf{The dual setting: when the context has initial control.}
In this setting, the proofs are similar,
since \autoref{thm:syntactic-equality} again holds.
Instead of soundness, we proved the dual of soundness,
a less interesting criterion stating that
the resulting whole target program modifies only shareable references.
\newtext{As a sanity check that our shallow embedding of unverified code
is correct,}
we also proved RrHP in \textbf{the setting where target contexts are
represented with typed syntactic expressions} of the 
deeply-embedded language mentioned in \autoref{sec:key-ucode},
a case in which target linking and back-translation first do the extra
step of translating the expression into the representation for target contexts (\ls{ctx}$^T$).
The proofs can be found in the artifact.

\section{Case study: simple stateful cooperative multi-threading}\label{sec:case-study}
To test our framework further, we verify a simple stateful
cooperative multi-threading case study.
In this case study, we are particularly interested in showcasing how scheduling
properties can be verified, even when the verified scheduler is orchestrating
tasks that are themselves unverified.

Our model of cooperative multi-threading is inspired by the work
of \citet{Dolan17} in the context of OCaml, where effect handlers are
used to program concurrent systems.
In this model, running threads are represented as computation trees.
A computation tree is either a final value
or a computation yielding control to the next task. When yielding control, the
continuation is a stateful computation that returns a new computation tree.
We can model these trees as follows:
\begin{lstlisting}
type atree (a:Type0) = | Return : a -> atree a | Yield : continuation a -> atree a
and continuation a = unit -> LR (atree a) (requires (fun h_0 -> True)) (ensures (fun h_0 _ h_1 -> h_0 $\chrel$ h_1)
\end{lstlisting}
\newremove{These trees share a global reference \ls{r},
which can be accessed by any of the involved stateful computations.}
Continuations represent unverified
code, \newremove{and thus they have a corresponding
specification:\ch{This sounds awkward: it's unverified thus it has a
spec?}  the pre-condition requires the reference \ls{r} to
be \ls{shareable}, while the post-condition ensures that}\newtext{and
their specification consists of a trivial precondition and a
postconditon stating} the unverified code only modifies \ls{shareable}
(or \ls{encapsulated}) references using the relation $\chrel$
from \newremove{\mbox{\autoref{fig:concrete-predicates}}}\newtext{\autoref{fig:model_sec_comp}}.
As before, we omit reference containment requirements from pre-
and post-conditions, along with fuel parameters used for proving
termination.

We model the scheduler state using a \ls{private} reference, which
stores (i) an execution history (a list of thread IDs), (ii) the ID of
the next thread to be executed, and (iii) a count of the number of
finished threads (to which we refer as \emph{inactive}). We refine the
scheduler state by adding a simple fairness property: each thread must
have taken either \ls{ticks} or \ls{ticks + 1} steps, ensuring fair
progress among all threads (with \ls{ticks} logically incremented on
each round of the round-robin progress):
\begin{lstlisting}
let fairness_ticks (limit : int) (ticks : nat) (hist : list int) (next : int) = ...
let fairness (limit : int) (hist : list int) (next : int) = exists  ticks : nat . fairness_ticks limit ticks hist next
\end{lstlisting}

\newtext{In the refined scheduler state, we also ensure that both the next
thread and inactive counts are within valid ranges.} The
argument \ls{k} is used to keep track of the number of threads \newremove{that
was}passed to the scheduler on initialization. In detail, we define it
as follows:
\begin{lstlisting}
type count_st k = hist:(list int) & next:int & inact:int{fairness k hist next /\ 0 <= next < k /\ 0 <= inact <= k}
let counter_preorder k : preorder (count_st k) = fun v v' -> v._1 `prefix_of` v'._1
let counter_t k = mref (count_st k) (counter_preorder k)
\end{lstlisting}

The scheduler is implemented as a structurally recursive function with the following type:
\begin{lstlisting}
val scheduler (r : ref int) (tasks:list (continuation unit)) (counter:counter_t (length tasks))
: LR unit (requires (fun h_0 -> is_private counter h_0 /\ is_shareable r h_0))
        (ensures (fun h_0 _ h_1 -> modif_shareable_encaps_and h_0 h_1 !{counter} /\ gets_shared empty h_0 h_1))
\end{lstlisting}
It takes a \newremove{list of tasks to be executed (represented
as \ls{unit}-typed continuations), a global shared
reference \ls{r},}\newtext{global shared
reference \ls{r} to an integer (which can be accessed by any of the involved
stateful computations), a list of tasks to be executed (represented
as \ls{unit}-typed continuations),} and the reference to the current
state of the scheduler (which is required to be \ls{private} in the
pre-condition). On each iteration, the scheduler picks the next task
to be run, runs it, and acts depending on whether the outcome was a
final value or a computation yielding control. We are guaranteed that
at each step we run a next task from the given task list and only
shared references are modified. This invariant allows us to verify
the \ls{fairness} property of the scheduler's state\newtext{, as we
know that the counter reference was not modified by the tasks}.  In
comparison to the post-condition of unverified code, notice that the
post-condition for the scheduler ensures that no labels are modified,
but we do not claim to have only modified shared and encapsulated
variables: the scheduler state gets updated as the scheduler runs the
tasks.

Building on the scheduler implementation, we can write a function to run a list of computations.
\begin{lstlisting}
let run (v:int) (tasks:list (ref int -> continuation unit){length tasks > 0}) : LR int
  (requires (fun h_0 -> forall_refs_heap contains_pred h0 v /\ forall_refs_heap is_shareable h0 v)) 
  (ensures (fun h_0 _ h_1 -> h_0 $\chrel$ h_1)) =
    let counter : counter_t (length tasks) = alloc (| [], 0, 0 |) in
    let r = alloc v in label_shareable r;
    let tasks = map (fun (f : ref int -> continuation unit) -> f r) tasks in
    scheduler r tasks counter;
    !r
\end{lstlisting}
Here we allocate two references: one that is shared by the different
tasks (and thus it is labeled \ls{shareable}) and one that is used to
store the scheduler's state (and which is kept \ls{private}).  We then
run the computations using the scheduler and in the end return the
final value of the shared reference. The scheduler starts with an
empty history, the first task selected, and zero inactive tasks.
\newtext{After its execution, the final history
satisfies the \ls{fairness} predicate, indicating that all tasks took
the same number of steps. However, this fact is not made
explicit in the postcondition of \ls{run}, which only focuses on the references
modified during the execution.}
%
%
As a final remark, we note that the function \ls{run} can also be called by
unverified code to cooperatively execute other unverified code. Importantly, \ls{run}
has the exact same post-condition as unverified code (as used in the
continuations).

\section{Related work}
\label{sec:related}


\newremove{We already mentioned in the introduction a list of techniques that allow
verification of a stateful program, however, they do not support sound verification
of a program that can then be linked with arbitrary unverified code with which it dynamically shares mutable references.}

\newtext{
\paragraph{Verification of stateful code.}
Labeled References are encoded in \fstar{} on top of Monotonic State,
and this enables specifying the footprint of a call using labels,
which allows us to tame the complexity of globally tracking the dynamically shared references.
As explained in \autoref{sec:key-verification}, the usual way of specifying the footprint of a call,
by providing the precise set of references that gets modified~\cite{preorders,Reynolds02,leino10dafny},
does not scale to higher-order unverified code when dynamic sharing of references is allowed.
This approach enabled verification of stateful \fstar{} programs linked with
unverified code. Similar approaches~\cite{SwaseyGD17,SammlerGDL20}
were used to build logics on Iris~\cite{iris} for verifying such programs,
with which we compare next.
}

\newtext{
\paragraph{Capabilities and robust safety.}
\citet{SwaseyGD17} introduce OCPL, a logic for verifying object capability patterns.
It can be used to verify concurrent, non-terminating, stateful programs
that are fully higher-order, including support for higher-order references.
OCPL has a notion of high and low references,
where high references can be wrapped in closures that are passed to the untrusted code,
while low references are considered shared between the trusted and untrusted code.
%
In OCPL, low references can only point to low references, and
callbacks that are crossing the verified-unverified boundary can take as argument
and return only low references.
This notion of high and low references seems to
correspond to the \ls{private/encapsulated} and \ls{shareable} labels in Labeled References.
%
%
\citet{SwaseyGD17} prove a robust safety theorem stating that code verified with
OCPL can be safely linked with arbitrary unverified code.
%
They then use OCPL to verify the robust safety of using wrappers like
dynamic sealing, the caretaker pattern and the membrane pattern.
%
%
By comparison, \secrefstar is a secure compilation framework using higher-order
contracts that dynamically enforce specifications about shared references.
Soundness of \secrefstar (\autoref{thm:soundness})
intuitively provides similar guarantees to OCPL's robust safety theorem.
\secrefstar additionally satisfies RrHP (\autoref{thm:rrhp}), the
preservation of a large class of properties including robust safety.
%

\citet{cerise} propose Cerise: a program logic for reasoning about
a low-level capability machine. \newremove{Cerise}\newtext{It} guarantees robust safety of verified code
that interacts with unverified code in multiple scenarios:
when the verified-unverified code share static allocated memory, 
when they use wrappers in the style of \citet{SwaseyGD17},
and when the unverified code can allocate dynamically its own memory.
To obtain safety in the last scenario, they use activation records, 
a technique involving saving the
content of the references that are not in the call's footprint,
and \newremove{then} restoring them after the call,
but \newremove{this}\newtext{it} is done dynamically and adds an extra cost\ca{---for \ls{prog} above,
the contents of \ls{secret} would be checked/restored twice, once for each call into the unverified code}.
In our work, unverified code can also allocate its own memory dynamically, and 
we show how to specify the footprint using labels so that there is
no need to perform \newremove{such}dynamic restoring of references
when calling into unverified code.
%
}

\newtext{
In a setting where, instead of ML-style references, one 
uses integer pointers that can be dereferenced, \citet{SammlerGDL20} showed that one can
enforce robust safety by using sandboxing. They allow sharing dynamically allocated pointers
between verified and unverified code under some conditions: 
they have to clearly distinguish between
the \textit{high heap} of the trusted code and the \textit{low heap} of the untrusted code,
something we do not have to do because our references are unforgeable.
}

\paragraph{Secure compilation.}
The approaches of \citet{AgtenJP15} and \citet{StrydonckPD21} support secure compilation of formally verified stateful programs
against unverified contexts.
\citet{AgtenJP15} securely compile modules verified with separation logic~\cite{Reynolds02},
but written in the C language, which normally lacks memory safety.
To overcome that, they add runtime checks at the boundary between verified and unverified code,\ch{in
  addition to just the dynamic contracts part;
  which for them is not higher-order? how much they can deal with call-backs, function pointers, etc?}
and also use two static notions of memory: local, which is never shared
with the unverified code, and heap memory, which the unverified code can read and write
arbitrarily because of the lack of memory safety.
The local memory is protected using hardware enclaves, and if one wants to
share data from local memory, one has to copy it to the heap memory.
To protect the heap memory when calling into the unverified code,
they first hash the complement of the assumed footprint of the call,
and when control is returned, they recompute the hash and compare it,
to make sure that the unverified code did not change heap locations outside the footprint.
What we do is different because we work with memory safe languages with ML-style
references, which act like capabilities.
We show that if the source and target language have capabilities,
one can avoid these expensive hash computations by getting the footprint right.
Moreover, we have only one notion of references, which can be kept private or be
dynamically shared with the unverified code by using the labeling mechanism.

\citet{StrydonckPD21} compile stateful code verified with separation logic to a target language with
hardware-enforced {\em linear} capabilities, which cannot be duplicated at runtime.
The main idea is to require the unverified context to return the linear capabilities back
so that the verified code can infer that the context cannot have stored them.
Linear capabilities are, however, an unconventional hardware feature that is
challenging to efficiently implement in practice~\cite{GeorgesGSTTHDB21}.
In contrast to their solution, \secrefstar{} provides soundness and security
in a target language with ML-style mutable references,
which can be implemented using standard capabilities~\cite{CHERI-ISA},
but this required us to overcome the challenge of unverified contexts that stash references.

\newtext{
\citet{DBLP:conf/icfp/NewBA16} propose a fully abstract compiler
from a simply typed \(\lambda\)-calculus with unit, sums, pairs, and recursive types
to a target language that has additional support for exceptions.
Their compiler is not, however, addressing ML-references or verified code as ours does.
The full abstraction property is a security criterion that was shown to be 
implied by the criterion we prove about our compiler, RrHP~(\autoref{thm:rrhp}),
by \citet{AbateBGHPT19}.
\citet{DBLP:conf/icfp/NewBA16} employ a comparable approach to ours to prove full abstraction,
involving 
a translation between two shallow embeddings.}

\paragraph{Gradual verification} is a combination of static and dynamic verification proposed by \citet{BaderAT18}.
For assertions that cannot be proven statically using the existing code annotations,
one can opt to let the tool insert dynamic checks that dynamically enforce the assertion. 
\citet{DiVincenzoC0} introduced Gradual C0 to do gradual verification for stateful programs.
Gradual C0 requires analyzing the unverified code, while our DSL does not.
Gradual C0 requires the unverified code to determine what checks to add, and it can be many,
while \secrefstar{} avoids adding checks to preserve logical invariants about private references.
\secrefstar{} adds higher-order contracts to enforce pre- and post-conditions at the boundary of
verified-unverified code.
In the absence of unverified code,
the dynamic checks are necessary to be able to establish logical predicates about shareable references.
\newremove{As future work, one could try to extend \secrefstar{} so that it statically analyzes unverified code
and discards the contracts using soft contract verification~\mbox{\cite{NguyenTH14,NguyenGTH18}}.}

\newtext{
\paragraph{Secure interoperability.}
There is also related work based on typed approaches to safe interoperability
between high and low-level code that share references. The main difference with
those works is that they look at safe interoperability between languages with
different semantics, while we look at code \textit{verified} in the traditional
Hoare/separation-logic style and unverified code.
\citet{DBLP:conf/pldi/PattersonPDA17} present a multi-language 
that supports interoperability between a simply typed functional language and
a typed assembly language. Their multi-language allows
integrating assembly code into the functional language, and reason about their
mix. \citet{DBLP:conf/snapl/PattersonA17} present the notion of linking types,
a more expressive way to annotate the interface, that enable one to
reason about behavioral equality
using only one language, even though the code may be linked with
code written in a different language with more expressive power.
\citet{DBLP:conf/icfp/0001W023} show how to soundly encapsulate foreign code
using linked types, so that the foreign code is unobservable\ch{what do you mean
  by ``unobservable''? sounds suspicious to me}
and the invariants of the language are undisturbed.}

\tbd{
  \begin{itemize}

  \item Amin Timany and Lars Birkedal. Reasoning About Monotonicity in Separation Logic. In CPP 2021, January 2021,
    \url{https://cs.au.dk/~timany/publications/pub_pages/2021_cpp_monotone_monoid/}
    CA: Not directly relatable. Maybe we find a place to cite it if it makes sense.
  
  \item Modular Denotational Semantics for Effects with Guarded Interaction Trees \url{https://dl.acm.org/doi/10.1145/3632854}%
  \ca{Their GITree doesn't seem to be a monad. Was this doubled checked?}

  \item \ch{and also wrt our SecurePtrs work (same assumptions but trusted code not verified);
  also need to relate to \cite{SammlerGDL20} and \url{https://dblp.org/pid/20/5607.html}}
\end{itemize}  
}

\meta{DONE:
\begin{itemize}
  \item Cerise\cite{cerise} \url{https://dl.acm.org/doi/10.1145/3623510}
    \begin{itemize}
      \item "We present a mathematical model and accompanying proof
      methods that can be used for formal verification of functional correctness of programs running on a capability
      machine, even when they invoke and are invoked by unknown (and possibly malicious) code"
      \item "The security properties we focus on are centered around
      memory compartmentalization, in particular, local state encapsulation."
      \item "a [static allocated] capability (respectively, a program) is “safe” if it cannot be
      used to invalidate an invariant. Hence, safe capabilities can be shared freely with unknown
      code. Safety of a capability is defined in the program logic as a unary logical relation"
      \item "value which is safe to share only gives transitive access to other values are safe to share, or
      code that is safe to execute (in the case of a sentry capability)." CA: same as us
      \item "A value which is safe to execute allows the machine to run while preserving logical invariants
      (by definition of ...., provided the registers contain safe values"
      \item "It is therefore safe as long as the [static allocated] words stored in the corresponding memory region are
        safe, and continue to be so when the memory gets modified"
      \item CA: They provide a logical predicate to verify programs that share static words with unverified code. The programs are
        written in something similar to Assembly. The result in section 4 is for statically allocated capabilities.
      \item Section 7 about "Dynamic Memory Allocation": 7.3 "We define a heap-based calling convention that uses malloc to dynamically allocate activation
      records. An activation record is encapsulated in a closure that reinstates its caller’s local state, and
      continues execution from its point of creation. Conceptually, our heap-based calling convention
      can be seen as a continuation-passing style calling convention (one passes control to the callee,
      giving it a continuation for returning to the caller). This is similar to the calling convention that
      was used for instance in the SML/NJ compiler to implement an extension of Standard ML with
      call/cc [Appel 1992] (in the setting of a traditional computer architecture).
      In the setting of a capability machine, our calling convention is furthermore secure in the sense
      that it enforces local state encapsulation. In other words, one can use it to pass control to unknown
      adversarial code, while protecting local data of the caller, thanks to the use of sentry capabilities
      to implement the continuation. "
    \end{itemize}

  \item Gradual verification. Mostly work by Jenna DiVincenzo 
  \url{https://dl.acm.org/doi/10.1145/3704808}, 
  \url{https://dl.acm.org/doi/10.1145/3428296}, 
  \url{https://drops.dagstuhl.de/entities/document/10.4230/LIPIcs.ECOOP.2021.3}
  \begin{itemize}
    \item No support for higher-order
    \item They don't seem to have a modifies clause
  \end{itemize}

  \item Formal Verification of Heap Space Bounds under Garbage Collection \url{https://cs.nyu.edu/~am15509/publications/thesis.pdf}\url{https://cs.nyu.edu/~am15509/publications/defense_slides.pdf}
    CA: I don't think it is relevant. There is no verification against unverified code.

\end{itemize}  
}

\section{Conclusions and Future Work}
\label{sec:future}


In this paper we introduce \secrefstar{}, a secure compilation framework that
allows for the sound verification of stateful programs
that can be compiled and then linked with arbitrary unverified code, importantly allowing
dynamic sharing of mutable references.
\secrefstar{} solves a problem that, to our knowledge, exists in all proof-oriented programming languages
(Rocq, Lean, Isabelle/HOL, Dafny, \fstar{}, etc).
We present our contributions in the context of \fstar{}, but we believe many of the ideas  could be
applied to other proof-oriented programming languages as well, especially to the dependently-typed ones.
Our monadic representation for Monotonic State, which uses a Dijkstra monad, can be ported
to other dependently-typed languages (\EG Dijkstra monads
also have been implemented in Rocq \cite{dm4all, SilverZ21}).
Once Monotonic State is ported, we believe one should be able to port the other contributions
we present in this paper too: Labeled References, the higher-order contracts,
the model for secure compilation, and the soundness and RrHP proofs.
Dijkstra monads work well in \fstar{} because of SMT automation and the effect mechanism of \fstar{},
but they are not the standard means to perform verification in other proof-oriented programming languages.
Dijkstra monads are however just a way to reason about monadic computations,
thus one could instead use other frameworks built to reason about monadic computations~\cite{Yoon2022,LetanR20,relational700,XiaZHHMPZ20}.

We believe that \secrefstar{} is a significant step towards the long-term goal
of building a secure compiler from \fstar{} to OCaml.
We see four orthogonal open problems that need to be solved to \newremove{get closer to}\newtext{reach} this goal:
\begin{inlist}
\item
{\em Modeling higher-order store} for effectful functions
around \fstar{}'s predicative, countable hierarchy of universes.
\citet{DKoronkevich2024} propose an acyclic higher-order store,
but this would not be sufficient to model realistic attackers written in OCaml
(\EG no support for Landin's Knot~\cite{LandinKnot}).
\newtext{
In \newremove{a different type theory than the one of \fstar{},}\newtext{a different type theory than that of \fstar,}
\citet{gitrees} adapt Interaction Trees to Guarded Type Theory to support cyclic higher-order stores.}

\item
{\em Modeling more of the effects of OCaml} by extending the representation of the
monadic effect, and then figuring out how this impacts the soundness and
security guaranteed by \secrefstar{}.
While this was already studied separately for the IO effect~\cite{sciostar},
how other effects would impact soundness and security is an open question.
\newtext{
For example,
adding non-termination would involve switching from the inductive definition of the Free monad
to a co-inductive definition (probably of Interaction Trees~\cite{XiaZHHMPZ20,SilverZ21}),
and probably would involve assuming that untrusted code can always loop.}

%


\item {\em End-to-end verified compilation to OCaml}.\ch{Almost everything below
  is about verified compilation, {\bf not} about secure compilation, so for now I
  renamed the item to ``verified''. Security is much more interesting than just
  verification though, so that's the harder open problem.}
While \secrefstar{} targets a language shallowly embedded in \fstar{},
at least two more steps are necessary to target OCaml,
both open research problems: verified quoting
\footnote{
   Verifying quoting in MetaRocq: \href{https://github.com/MetaRocq/metarocq/blob/a50b36cc7688df75f4932d7242a889c5df8e5bea/quotation/theories/README.md}{https://github.com/MetaRocq/metarocq/blob/coq-8.16/quotation/theories/README.md}}
(to go from the shallow embedding to a deep embedding),
and verified erasure specialized for a Dijkstra monad.
\newremove{For verified erasure,} \citet{coq-to-ocaml} present a verified correct erasure from Rocq to Malfunction (one of the intermediate languages of OCaml),
which we hope can be specialized to computations represented using monads and extended to verify security.
Their proof relies on the formalization of Rocq~\cite{SozeauBFTW20},
which means that before one is able to reproduce and extend their proof,
one would first have to give a formalization to \fstar{}.
  
\item {\em Encoding Labeled References on top of Separation Logic},
yet how to do Separation Logic in \fstar{} well
is itself a topic of ongoing research~\cite{steelcore,steel,pulsecore}. \da{How is (4) related to getting a secure compiler from \fstar to OCaml specifically?}\ch{Fair point}
\ca{The challenge of encoding Labeled References on top of SL is having a way to keep track of the
  labels of the references.}
\end{inlist}

\newtext{
Finally, it would be interesting to do a performance evaluation of the
added higher-order contracts.
To improve performance, one could try to extend \secrefstar{} so that it statically analyzes unverified code
and discards the contracts using soft contract verification~\cite{NguyenTH14,NguyenGTH18}.}

\meta{
  \textbf{Open challenges:}
  \begin{itemize}
    \item We're limited to first-order stores because universe constraints prevent us to store 
    Free/ITree monads in it. Check Zulip FStar dm4all --- Representation of div.

    \item In a concurrent setting, one would have to avoid data races. Would it be possible to have a label for that?
  \end{itemize}
}

\section*{Data Availability Statement}
This paper comes with an \fstar{} artifact that includes the full
implementation of \secrefstar{}, the machine-checked proofs, and
examples. The examples can be extracted to native OCaml code (without any monadic
representation).
The artifact is available on Zenodo~\cite{secrefstar_artifact_zenodo} and Github~\cite{secrefstar_artifact_github}.

\ifanon\else
\begin{acks}
%
This work was in part supported
by the \grantsponsor{1}{European Research Council}{https://erc.europa.eu/}
under \ifcamera\else ERC\fi{} Starting Grant SECOMP (\grantnum{1}{715753}),
by the German  Federal Ministry of Education and Research BMBF (grant 16KISK038, project 6GEM),
and by the Deutsche Forschungsgemeinschaft (DFG\ifcamera\else, German Research Foundation\fi)
as part of the Excellence Strategy of the German Federal and State Governments
-- EXC 2092 CASA - 390781972.
Exequiel Rivas was supported by the Estonian Research Council starting grant PSG749.
Danel Ahman was supported by the Estonian Research Council grant PRG2764.
\end{acks}
\fi

\appendix

\section{Additional Example: Guessing game}

To showcase the use of encapsulated references, in \autoref{fig:guess_example} we present another example of a verified program that plays the 
``guess the number I'm thinking of'' with an unverified player. The verified program gives to the
player a range and a callback that the player can use to guess the number.
The verified program saves the guesses of the player using an encapsulated monotonic reference
that has a preorder that ensures that no guess is forgotten (line \ref{line:guess-mref}).
\secrefstar{} does not add any dynamic checks using higher-order contracts during compilation
because the post-condition of the player follows from the universal property of unverified code (\autoref{sec:key-ucode}).

\begin{figure}[h]
  \begin{lstlisting}[numbers=left,escapechar=\$,xleftmargin=13pt,framexleftmargin=10pt,numberstyle=\color{dkgray}]
type cmp = | LT | GT | EQ
let post = fun h_0 _ h_1 -> modif_shareable_and_encaps h_0 h_1 /\ same_labels h_0 h_1
type player_t = (l:int * r:int * (guess:int -> LR cmp (fun _ -> True) post)) -> LR int (fun _ -> l < r) post
let play_guess (l pick r:int) (player:player_t) : LR (bool * list int) (fun _ -> l < pick /\ pick < r) post =
  let guesses : mref (list int) (fun l l' -> l `prefix_of` l') = alloc [] in label_encapsulate guesses; $\label{line:guess-mref}$
  let final_guess = 
    player (fun x -> guesses := guesses @ [x]; if pick = x then EQ else if pick < x then LT else GT) in
  guesses := guesses @ [final_guess];
  if pick = final_guess then (true, !guesses)
  else (false, !guesses)
  \end{lstlisting}
  \caption{The illustrative example of a function that plays "Guess the number I'm thinking of"}
  \label{fig:guess_example}
\end{figure}

\ifanon\clearpage\fi

\bibliographystyle{abbrvnaturl}
\bibliography{fstar}

\end{document}
\endinput